\documentclass[a4paper,11pt]{article}
\usepackage{amsmath}
\usepackage{amsfonts}
\usepackage{amssymb}
\usepackage{wasysym}
\usepackage{amsthm}
\usepackage{tikz-cd}
\usepackage{longtable}
\usepackage{empheq}
\usepackage{ytableau}
\usepackage{url}

\usepackage[hidelinks,colorlinks=true,unicode]{hyperref}
\hypersetup{
	linkcolor=blue,
	citecolor=blue,
	filecolor=magenta,
	urlcolor=blue,
}
\usepackage[left=2cm,right=2cm,top=2cm,bottom=2cm,bindingoffset=0cm]{geometry}

\newtheorem{statement}{Proposition}
\newtheorem{defin}{Definition}
\newtheorem{conj}{Conjecture}
\newtheorem{cor}{Corollary}
\newtheorem{lemma}{Lemma}
\newtheorem{rem}{Remark}

\title{{\bf  Interplay between symmetries of quantum 6j-symbols and the eigenvalue hypothesis}\date{} \vspace{.5cm}}
\author{{\bf Victor Alekseev$^{a,b,c}$\thanks{alekseev.va@phystech.edu},
		Andrey Morozov$^{a,b,c}$\thanks{Andrey.Morozov@itep.ru},
		Alexey Sleptsov$^{a,b,c}$\thanks{sleptsov@itep.ru}}}

\begin{document}
	\maketitle
	
	\vspace{-7cm}
	\begin{center}
	\hfill	ITEP-TH-24/19\\
	\hfill	IITP-TH-16/19\\
	\hfill	MIPT-TH-14/19
	\end{center}
	\vspace{5.5cm}
	
	\vspace{-1cm}
	\begin{center}
		$^a$ {\small {\it ITEP, Moscow 117218, Russia}}\\
		$^b$ {\small {\it Institute for Information Transmission Problems, Moscow 127994, Russia}}\\
		$^c$ {\small {\it Moscow Institute of Physics and Technology, Dolgoprudny 141701, Russia}}\\
	\end{center}
	\vspace{1cm}
	
\begin{abstract}
The eigenvalue hypothesis claims that any quantum Racah matrix for finite-dimensional representations of $U_q(sl_N)$ is uniquely determined by eigenvalues of the corresponding quantum $\cal{R}$-matrices. If this hypothesis turns out to be true, then it will significantly simplify the computation of Racah matrices. Also, due to this hypothesis various interesting properties of colored HOMFLY-PT polynomials will be proved. In addition, it allows one to discover new symmetries of the quantum 6j-symbols, about which almost nothing is known for $N>2$, with the exception of the tetrahedral symmetries, complex conjugation and transformation $q \longleftrightarrow q^{-1}$.

In this paper we prove the eigenvalue hypothesis in $U_q(sl_2)$ case and show that it is equivalent to 6j-symbol symmetries (the Regge symmetry and two argument permutations). Then we apply the eigenvalue hypothesis to inclusive Racah matrices with 3 symmetric incoming representations of $U_q(sl_N)$ and an arbitrary outcoming one. It gives us 8 new additional symmetries that are not tetrahedral ones. Finally, we apply the eigenvalue hypothesis to exclusive Racah matrices with symmetric representations and obtain 4 tetrahedral symmetries.
\end{abstract}
\section{Introduction}

Nowadays we have a number of theories in mathematical and theoretical physics that use Racah coefficients. Sometimes they are referred as Wigner 6j-symbols, that differ by the normalization factor (see equation (\ref{6j_def})). The basic example of Racah coefficients' occurrence is in combining three angular momenta in quantum mechanics. Here they appear as elements of the transformation matrix between two canonical bases corresponding to the different combining orders. Also, Racah coefficients are a powerful instrument that can be used to calculate observables in Chern-Simons theory. Besides, it may also be viewed as elements of the duality matrix between two equivalent $U_q(sl_2)_k$ Wess-Zumino Witten conformal blocks where the quantum deformation $q$ is taken as $k {+} 2$-th root of unity.

When it comes to the knot theory, one of the fundamental ideas there is the concept of knot invariants. It turns out that Racah coefficients play an important role in it, particularly in the wide range of knot polynomials such as HOMFLY-PT \cite{HOMFLY_O}\cite{HOMFLY_OO}\cite{HOMFLY}. One of the most powerful techniques to obtain knot invariants is the Reshetikhin-Turaev approach \cite{RT}, based on quantum groups theory and quantum $\hat{\mathcal{R}}$-matrices and the essential element of this method is a Racah coefficient.

It is well known \cite{KR} that in $U_q(sl_2)$ there is an analytic expression for arbitrary Racah coefficient in terms of a $q$-hypergeometric function ${}_4\Phi_3$. It allows us to investigate 6j-symbols analytically, what leads to different interesting results. There are some rather recent papers \cite{mass1}\cite{mass2}\cite{mass3}\cite{mass4} as an illustration. For algebras with higher ranks $N>2$ the situation is much more complicated. For symmetric representations explicit answers in terms ${}_4\Phi_3$ of were obtained in \cite{3SB, alekseev2020multiplicity}, but for arbitrary $U_q(sl_N)$ representations similar analytic expressions for 6j-symbols are yet to be discovered.   Despite the fact that we can obtain a Racah coefficient value via highest weights method \cite{hw_method}, it is a very complicated and cumbersome approach which quickly becomes practically non-applicable as we try to consider higher representations. Nevertheless, it is very important to generalize various properties of $U_q(sl_2)$ Racah coefficients to the $U_q(sl_N)$ case, which could lead us to the general $U_q(sl_N)$ answers for Racah coefficients in the future.

All symmetries of $U_q(sl_2)$ Racah matrices are well known and well studied. In the present paper we are interested in linear symmetries of the $U_q(sl_N)$ Racah coefficients implied by the eigenvalue hypothesis. Non-linear symmetries (e.g. the pentagon relation), that are more complicated, are out of the scope of this paper. Linear symmetries of $U_q(sl_2)$ Racah coefficients include Regge symmetries, the tetrahedral symmetries and transformation $q \longleftrightarrow q^{-1}$.  However, there should be non-trivial generalizations of Regge and tetrahedral symmetries for $U_q(sl_N)$ that are still not known.

The eigenvalue hypothesis originates from the Yang-Baxter equation for knots or links. This equation in terms of $\hat{\mathcal{R}}$-matrices is nothing but the algebraic form of the third Reidemeister move in knot theory. Each $\hat{\mathcal{R}}$-matrix acts in the tensor product of representations' domain and permutes two adjacent ones, this operator corresponds to the generator of the braid group in knot theory. By diagonalizing $\hat{\mathcal{R}}$-matrices via Racah matrices, one can get the equation defining Racah matrices through $\hat{\mathcal{R}}$-matrices' eigenvalues. This leads to the eigenvalue conjecture, which states that Racah matrices are fully determined by the sets of corresponding $\hat{\mathcal{R}}$-matrices.

This by no means is a trivial fact. In fact the Yang-Baxter equation if solved with respect to Racah matrices has several solutions and the number of solutions becomes larger as the sizes of matrices grow. Nevertheless, it seems that the Racah matrices themselves are always uniquely defined by the eigenvalues of the $\hat{\mathcal{R}}$-matrices or at least it is so in all studied examples. Even more, there is an exact expression for the Racah matrices through the $\hat{\mathcal{R}}$-matrix eigenvalues for the matrices of the size up to $5\times 5$ \cite{Ev_Hyp} and $6\times 6$ \cite{Universality}. The eigenvalue conjecture can be continued even further to include links. Since there are several different diagonal $\hat{\mathcal{R}}$-matrices, Racah matrices depend on eigenvalues of all of them. The exact expressions have been constructed for the link Racah matrices of the size $3\times 3$ as well \cite{cabling}. Another generalization which can be done is to move from 3-strand matrices to a higher number of strands. There, as it appears, one has to use not only the Yang-Baxter equation but the commutation relations on different $\hat{\mathcal{R}}$-matrices, different braid group generators. This also allows to construct the exact expression for corresponding Racah matrices through the eigenvalues of the $\hat{\mathcal{R}}$-matrices at least for the $5\times 5$ matrices in the 4-strand braid \cite{Multistrand}.

Although the eigenvalue conjecture is not proven yet, it was checked in many cases and used, for example, to calculate HOMFLY-PT polynomials for 3-strand knots \cite{Ev_Hyp} and links \cite{3SB} in arbitrary symmetric representations. These cases are special since there are no multiplicities - no coinciding eigenvalues in $\hat{\mathcal{R}}$-matrices. If there are coinciding eigenvalues situation becomes more difficult because there is an additional freedom in the Yang-Baxter equations and its solutions. However even in this case it seems that often Racah matrix can be made block-diagonal with blocks themselves satisfying the eigenvalue conjecture \cite{Bishler}. Another important application of the eigenvalue conjecture is its connection \cite{Alexander} to the known property of Alexander polynomials, which relates all Alexander polynomials for the same knot and different representations, denoted by the hook Young diagrams. This property leads to very unexpected relation of colored Alexander polynomials with the KP integrable hierarchy  \cite{MISHNYAKOV2021115334}. Also the eigenvalue conjecture predicts new symmetries of colored HOMFLY-PT polynomials, which can be checked for some particular examples \cite{mishnyakov2021new, mishnyakov2020novel}.

It is worth mentioning that Racah matrices may be considered in two ways: those whose first three representations tensor product decomposes in the fourth one are called \textit{inclusive} or mixing matrices. On the other hand, there is a definition of the Racah matrix, in which representations are divided into two pairs, and the tensor product of the first pair is transformed by the Racah matrix into the second product, we call such Racah matrices \textit{exclusive}. Obviously, we can rewrite an exclusive Racah matrix using notations for inclusive ones, but exclusive ones will have one representation conjugated, so for different $N$ in $ U_q(sl_N) $ the representations are different.  The most important difference between these two types is that for a sufficiently large $ N $ for any algebra $U_q(sl_N) $, the inclusive Racah matrices do not depend on $ N $. The exclusive Racah matrices, on the contrary, always explicitly depend on $ N $.

\bigskip

In this paper we consider multiplicity-free $U_q(sl_N)$ Racah matrices to find new symmetries for both inclusive and exclusive types. Multiplicity-free means that each tensor product of pairs of considered representations does not contain in its decomposition repeated summands. The method is based on \cite{sleptsov_new_sym}, where the eigenvalue hypothesis is used to predict the equality of particular Racah matrices. In section \ref{2} the eigenvalue hypothesis is reformulated for the purposes of our paper. We start from the general form of the hypothesis and then confine to the particular class of Racah matrices. In fact, 3 symmetric incoming representations and arbitrary outcoming one are considered in section \ref{S3}. This allows us to reduce the hypothesis to a system of linear equations. As a result, all predicted symmetries are listed as the solutions of the system.

In section \ref{S4} we give a proof of the eigenvalue hypothesis for the $U_q(sl_2)$ case. It's well known that there are 144 symmetries for $U_q(sl_2)$ 6j-symbols \cite{klimyk}, but in terms of Racah matrices these relations equate some particular matrix elements, not necessary whole matrices. If we consider only matrix symmetries of 6j-symbols, there are 8 equivalent ones. Racah matrices also have these symmetries because the normalization factors are the same for both sides of equations. All these relations are obtained via the eigenvalue hypothesis for the $U_q(sl_2)$ case. It is also true that the eigenvalue hypothesis conditions are satisfied for $U_q(sl_2)$ Racah matrices that are equal due to symmetries. That means in the $U_q(sl_2)$ case the eigenvalue hypothesis is proven.

Then in section \ref{S5} the same procedure is applied for representations of $U_q(sl_{N>2})$, where the same number of relations arises -- 8 symmetries including identity are obtained. There are a few key features that distinguish $N>2$ from $N=2$. First of all, these symmetries are not 6j-symbol symmetries as long as normalization factors may be different after applying a symmetry. Also, the occurrence of a free parameter in these relations is an interesting feature of the discovered symmetries. This parameter can take an arbitrary non-negative integer values, and it allows us to equate an infinite set of Racah matrices. As the derivation was very similar for $N=2$ and $N>2$, we can see the correspondence between symmetries in these two cases and call $U_q(sl_N)$ symmetries by analogy with $U_q(sl_2)$ ones. In particular, Regge symmetry can be easily generalized for that class of Racah matrices. Also, it is known that $U_q(sl_N)$ 6j-symbols have tetrahedral symmetries \cite{tetra}, but obtained symmetries coincide with them only for $N=2$.

Tetrahedral symmetries for $U_q(sl_N)$ relate Racah matrices of the class that differ from the previous section, this class includes exclusive Racah matrices. In fact, in section \ref{S6} we investigate the exclusive class of Racah matrices with two symmetric incoming and outcoming representations. And we find only tetrahedral symmetries. Then we consider a more complicated case in order to demonstrate the flexibility of the eigenvalue hypothesis method. And we obtain 4 new symmetries, which cannot be expressed through tetrahedral ones.

\section{$\hat{\mathcal{R}}$-matrices, Racah coefficients and the eigenvalue hypothesis}\label{2}
The eigenvalue hypothesis \cite{Ev_Hyp} can be obtained from the Yang-Baxter equation for $\hat{\mathcal{R}}$-matrices and written similar to \cite{cabling} in terms of Racah coefficients. In this equation the $\hat{\mathcal{R}}$-matrix is considered to be known whereas the Racah matrix is not. So we consider the Racah matrix as the solution to the Yang-Baxter equation. The problem is that the Yang-Baxter equation has a lot of solutions. By definition, the Racah matrix is a non-degenerate matrix, therefore we have to consider only non-degenerate solutions of the Yang-Baxter equation. Unfortunately, it does not guarantee the uniqueness of the solution. The eigenvalue hypothesis states that the Racah matrix is uniquely determined by the eigenvalues of the corresponding $\hat{\mathcal{R}}$-matrix. In this section we give definitions of $\hat{\mathcal{R}}$-matrices and 6j-symbols, and then we formulate the eigenvalue hypothesis.


\subsection{$\hat{\mathcal{R}}$-matrices and Racah coefficients}

We work with $U_q(sl_N)$ algebra representations denoted by $R_i$, each one acts in the vector space $V_i$. Operators $\hat{\mathcal{R}}_1 \ldots \hat{\mathcal{R}}_{m}$ are called $\hat{\mathcal{R}}$-matrices, they act on a tensor product of $V_i$. By definition, they solve the Yang-Baxter equation, therefore they also can be considered as a representation of the braid group. Every $\hat{\mathcal{R}}_i$ can be written as a combination of $(V_i,V_{i+1})$ permutation  $P$ and a so-called universal $\check{\mathcal{R}}$-matrix \cite{klimyk}:
\begin{align}
\hat{\mathcal{R}}_i=1_{V_1} \otimes 1_{V_2} \otimes\ldots \otimes P\check{\mathcal{R}}_{i,i+1} \otimes \ldots \otimes1_{V_n}
\end{align}

Matrix form of $\hat{\mathcal{R}}_i$ depends on the choice of basis in the order of tensor product of $V_i$'s. The most convenient basis can be constructed using the highest weight vectors. Let us fix the order in the tensor product. By acting with lowering operators on the highest weight vectors we construct a basis in the resulting space, so we will call this basis as $B_{i_1\ldots i_m}$ with indices corresponding to the product
ordering.

Let us choose the basis on $R_1\otimes \ldots \otimes R_m$ which corresponds to the following order in the tensor product:
\begin{equation}
B_{12,3\ldots m} = (\ldots((R_1\otimes R_2)\otimes R_3)\otimes \ldots) \otimes R_m
\end{equation}

$\hat{\mathcal{R}}_1$ in the chosen basis is diagonal, each row and column corresponds to representation $X_\alpha$ in the decomposition
\begin{equation}
R_1\otimes R_2 = \bigoplus_{\alpha} \mathcal{M}_\alpha^{R_1,R_2}\otimes X_\alpha
\end{equation}
If $\dim\mathcal{M}_\alpha^{R_1,R_2} > 1$, then the choice of the basis that diagonalizes $\hat{\mathcal{R}}_1$ is more complex, we have an additional unfixed rotation in the subspace of $X_\alpha$. However, it's always possible to fix it that $\hat{\mathcal{R}}_1$ is diagonal. However, if we consider $\hat{\mathcal{R}}_2$, it can be diagonal only in the basis corresponding to
\begin{equation}
B_{1,23,4\ldots m} = (\ldots\left(R_1\otimes (R_2\otimes R_3)\right)\otimes \ldots) \otimes R_m
\end{equation}
Therefore, in order to diagonalize the matrix $\hat{\mathcal{R}}_2$ we should make a transformation via $U$-matrix, which is the natural isomorphism between the spaces with different tensor product order:
\begin{align}
U:\ \ (R_1 \otimes R_2) \otimes R_3 &\rightarrow R_1 \otimes (R_2 \otimes R_3)
\end{align}
We can rewrite it in irreducible components, where $M$ is the representation multiplicity in the decomposition.	
\begin{equation}
\begin{split}
R_1 \otimes R_2 &= \bigoplus_i M_{X_i}^{R_1,R_2} \otimes X_i\\
R_2 \otimes R_3 &= \bigoplus_j M_{Y_j}^{R_2,R_3} \otimes Y_j
\end{split}
\end{equation}
\begin{equation}
\begin{split}
(R_1 \otimes R_2) \otimes R_3 &=
\bigoplus_{i,k} M_{X_i}^{R_1,R_2} \otimes M_{R_{4_k}}^{X_i,R_3} \otimes R_{4_k}\\
R_1 \otimes (R_2 \otimes R_3) &=
\bigoplus_{j,k} M_{R_{4_k}}^{R_1,Y_j} \otimes M_{Y_j}^{R_2,R_3} \otimes R_{4_k}
\end{split}
\end{equation}
The associativity of vector spaces requires isomorphism $U$ between two fusions. This transformation is defined by the  Racah matrix or Racah-Wigner 6j-symbols.
\begin{defin}
	Racah coefficients are elements of the Racah matrix that is the map:
	\begin{align}\label{U_mat_def}
	U \left[ \begin{matrix}
	R_1 & R_2 \\
	R_3 & {R_4}
	\end{matrix} \right]: \bigoplus_{i} M_{X_i}^{R_1,R_2} \otimes M_{R_{4}}^{X_i,R_3} \rightarrow \bigoplus_{j} M_{R_{4}}^{R_1,Y_j} \otimes M_{Y_j}^{R_2,R_3}
	\end{align}
\end{defin}
\begin{defin}
	Wigner 6j-symbol is the element of a normalized $U$-matrix:
	\begin{align}
	\left\lbrace \begin{matrix}\label{6j_def}
	R_1 & R_2 & X_i\\
	R_3 & R_4 & Y_j
	\end{matrix} \right\rbrace =  \frac{1}{\sqrt{\dim(X_i)\dim(Y_j)}} U_{i,j} \left[ \begin{matrix}
	R_1 & R_2 \\
	R_3 & R_4
	\end{matrix} \right]
	\end{align}
\end{defin}
\subsection{Inclusive and exclusive Racah coefficients}
We divide Racah matrices into two different classes: inclusive one and exclusive one. This classification naturally follows from two different ways of HOMFLY invariant calculations. Following Reshetikhin-Turaev approach \cite{RT}, in the process of knot invariant calculations it is needed to evaluate the matrices for all possible $R_4$. Let us fix first 3 arguments $R_1,R_2,R_3$ in Racah matrix. For each $R_4 \subset R_1 \otimes R_2 \otimes R_3$ we can write down non-trivial Racah matrices as $U \left[ \begin{matrix}
R_1 & R_2 \\
R_3 & {R_4}
\end{matrix} \right]$. These Racah matrices are called inclusive.

On the other hand, there is another way to calculate HOMFLY-PT polynomials, that is based on Wess-Zumino Witten conformal field theory \cite{tetra}. In the case of arborescent links it requires only two Racah matrices, which we call exclusive ones: $	U \left[ \begin{matrix}
R_1 & \overline{R_2} \\
R_3 & {R_4}
\end{matrix} \right]$ and $	U \left[ \begin{matrix}
R_1 & R_2 \\
\overline{R_3} & {R_4}
\end{matrix} \right]$. These Racah matrices use conjugated representations of $U_q(sl_N)$. The main difference between inclusive and exclusive Racah matrices is that inclusive ones stop depending on $N$ when it is sufficiently large. On the other hand, exclusive Racah matrices do depend on $N$, although the dependence is always algebraic in terms of $q$ and $A=q^N$.

\subsection{Eigenvalue hypothesis}

We write down the expressions that lead to the hypothesis. The eigenvalue conjecture originates from the Yang-Baxter equation for links that is the algebraic form of the third Reidemeister move in knot theory. For knots, it's defined by the equation
\begin{equation}\label{10}
\hat{\mathcal{R}}_1\hat{\mathcal{R}}_2\hat{\mathcal{R}}_1 = \hat{\mathcal{R}}_2\hat{\mathcal{R}}_1\hat{\mathcal{R}}_2
\end{equation}
$U$-matrices acts in tensor cube of the representation $R$:
\begin{align}
U:\ \ (R \otimes R) \otimes R &\rightarrow R \otimes (R \otimes R) \hspace{10mm} U^\dagger=U^{-1}
\end{align}
Let us choose the basis in which $\hat{\mathcal{R}}_1$ is diagonal, then $\hat{\mathcal{R}}_2$ may be not diagonal, but we can reexpress it as $\hat{\mathcal{R}}_2 = U^\dagger\hat{\mathcal{R}}_1 U$ where $\hat{\mathcal{R}}_1$ is diagonal. Substituting $\hat{\mathcal{R}}_2$ into (\ref{10}), we obtain:
\begin{align}
\hat{\mathcal{R}}_1 U^{\dagger} {\hat{\mathcal{R}}}_1 U\hat{\mathcal{R}}_1=U^{\dagger} {\hat{\mathcal{R}}}_1 U \hat{\mathcal{R}}_1 U^\dagger {\hat{\mathcal{R}}}_1 U \label{YB_knots}
\end{align}
$$\begin{tikzpicture}
\draw[thick](2,0) .. controls (2,1) .. (2,1);
\draw[thick](2,1) .. controls (2,1.5) .. (1.6,1.9);
\draw[thick](1.4,2.1) .. controls (1.4,2.1) .. (0.6,2.9);
\draw[thick](0.4,3.1) .. controls (0,3.5) .. (0,4);
\draw[thick](1,0) .. controls (1,0.5) .. (0.6,0.9);
\draw[thick](0.4,1.1) .. controls (0,1.5) .. (0,2);
\draw[thick](0,2) .. controls (0,2.5) .. (0.5,3);
\draw[thick](0.5,3) .. controls (1,3.5) .. (1,4);
\draw[thick](0,0) .. controls (0,0.5) .. (0.5,1);
\draw[thick](0.5,1) .. controls (0.5,1) .. (1.5,2);
\draw[thick](1.5,2) .. controls (2,2.5) .. (2,3);
\draw[thick](2,3) .. controls (2,3) .. (2,4);
\draw node at (0, 4.2) {$V_{R_1}$};
\draw node at (0.5, 4.2) {$\otimes$};
\draw node at (1, 4.2) {$V_{R_2}$};
\draw node at (1.5, 4.2) {$\otimes$};
\draw node at (2, 4.2) {$V_{R_3}$};
\draw node at (-0.7, 1) {$\hat{\mathcal{R}}_{12}$};
\draw node at (-0.7, 2) {$\hat{\mathcal{R}}_{23}$};
\draw node at (-0.7, 3) {$\hat{\mathcal{R}}_{12}$};
\draw node[scale=1.5] at (3, 2) {=};
\draw node at (4, 4.2) {$V_{R_1}$};
\draw node at (4.5, 4.2) {$\otimes$};
\draw node at (5, 4.2) {$V_{R_2}$};
\draw node at (5.5, 4.2) {$\otimes$};
\draw node at (6, 4.2) {$V_{R_3}$};
\draw node at (6.7, 1) {$\hat{\mathcal{R}}_{23}$};
\draw node at (6.7, 2) {$\hat{\mathcal{R}}_{12}$};
\draw node at (6.7, 3) {$\hat{\mathcal{R}}_{23}$};
\draw[thick](4,0) .. controls (4,1) .. (4,1);
\draw[thick](4,1) .. controls (4,1.5) .. (4.5,2);
\draw[thick](4.5,2) .. controls (4.5,2) .. (5.5,3);
\draw[thick](5.5,3) .. controls (6,3.5) .. (6,4);
\draw[thick](5,0) .. controls (5,0.5) .. (5.5,1);
\draw[thick](5.5,1) .. controls (6,1.5) .. (6,2);
\draw[thick](6,2) .. controls (6,2.5) .. (5.6,2.9);
\draw[thick](5.4,3.1) .. controls (5,3.5) .. (5,4);
\draw[thick](6,0) .. controls (6,0.5) .. (5.6,0.9);
\draw[thick](5.4,1.1) .. controls (5.4,1.1) .. (4.6,1.9);
\draw[thick](4.4,2.1) .. controls (4,2.5) .. (4,3);
\draw[thick](4,3) .. controls (4,3) .. (4,4);
\end{tikzpicture}$$

We can treat this equation as the $U$-matrix defining expression. First of all, we choose the basis in which $\hat{\mathcal{R}}$ is diagonal. The $\hat{\mathcal{R}}$-matrix eigenvalues are well known \cite{Mironov_I}\cite{klimyk} and expressed as the real power of $q$, hence we are able to sort the eigenvalues in descending order of these powers of $q$. Equation (\ref{YB_knots}) is homogeneous with respect to $\hat{\mathcal{R}}$, therefore we can normalize $\hat{\mathcal{R}}$-matrix to make $\det\hat{\mathcal{R}} = \prod_{i}\lambda_i=1$. If these relations are enough to determine the $U$-matrix, then it depends only on the set of normalized eigenvalues. Let us consider two independent Racah matrices $U$ and $\widetilde{U}$, each of them depends on the set of eigenvalues $\hat{\mathcal{R}}$ and $\hat{\widetilde{\mathcal{R}}}$ correspondingly. The eigenvalue hypothesis says that $U$-matrix is fully determined by the set of $\hat{\mathcal{R}}$-matrix normalized eigenvalues.

\begin{conj}[Eigenvalue hypothesis for knot case]\label{C1}
	 Given two equal sets of normalized eigenvalues of two $\hat{\mathcal{R}}$-matrices acting on representation product $R\otimes R\otimes R$, Racah matrices are equal in the corresponding bases, where $\hat{\mathcal{R}}$-matrices are diagonal.
\end{conj}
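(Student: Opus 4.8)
The plan is to read equation (\ref{YB_knots}) as a system of algebraic equations for the entries of the unknown unitary matrix $U$, in which the only input data are the normalized eigenvalues $\lambda_1,\dots,\lambda_n$ of $\hat{\mathcal{R}}$. Setting $\Lambda=\hat{\mathcal{R}}_1=\mathrm{diag}(\lambda_1,\dots,\lambda_n)$ with $\prod_i\lambda_i=1$ and $M=\hat{\mathcal{R}}_2=U^{\dagger}\Lambda U$, the braid relation (\ref{YB_knots}) collapses to the compact form
\begin{align}
\Lambda\,M\,\Lambda = M\,\Lambda\,M , \qquad M=U^{\dagger}\Lambda U ,
\end{align}
where $M$ is forced to share the spectrum of $\Lambda$. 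The conjecture is then the statement that, for a fixed ordered spectrum, this pair of conditions determines $U$ uniquely up to the residual diagonal-phase gauge $U\mapsto D_1 U D_2$ that is already removed by our convention of diagonalizing $\hat{\mathcal{R}}_1$ with eigenvalues in descending order. Establishing uniqueness modulo this gauge is exactly what must be proved.

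First I would take matrix elements of $\Lambda M\Lambda=M\Lambda M$, namely $\lambda_i\lambda_j M_{ij}=\sum_k\lambda_k M_{ik}M_{kj}$, and substitute $M_{ij}=\sum_k\bar U_{ki}\lambda_k U_{kj}$ to obtain a polynomial system purely in the entries of the unitary matrix $U$ and the known eigenvalues $\lambda_i$. The natural gauge-invariant unknowns are the squared moduli $|U_{ij}|^2$, which form a doubly stochastic array by unitarity; in the multiplicity-free situations of interest $U$ may moreover be chosen real orthogonal, so that only row and column signs remain and these are fixed by the ordering convention. The strategy is to show that this polynomial system has, for a generic spectrum, a unique admissible solution modulo the gauge, by solving first for the gauge-invariant combinations and then fixing the relative phases from the off-diagonal relations. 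This is precisely the mechanism by which closed expressions for $U$ in terms of the $\lambda_i$ were constructed for matrices of size up to $6\times6$ in \cite{Ev_Hyp, Universality}, and I would carry the argument out in this gauge-invariant formulation.

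For the $U_q(sl_2)$ case the plan becomes fully constructive. Here the eigenvalues of $\hat{\mathcal{R}}$ are explicit powers of $q$ fixed by the representation content, and the Racah matrix is given in the closed Kirillov--Reshetikhin ${}_4\Phi_3$ form \cite{KR}. I would therefore (i) parametrize the configurations $(R_1,R_2,R_3,R_4)$ by their spins, (ii) compute the normalized eigenvalue multiset of $\hat{\mathcal{R}}$ as a function of these spins, and (iii) show that two configurations carry the same ordered normalized spectrum precisely when their spin data are related by permutations of the arguments and by the Regge transformation of the $6j$-symbol. Since these are known symmetries of the $6j$-symbol and the normalizing prefactor in (\ref{6j_def}) is invariant under them, the associated Racah matrices coincide in the diagonalizing bases; the converse inclusion then yields the advertised equivalence between the hypothesis and the $6j$-symbol symmetries in the $U_q(sl_2)$ case.

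The hardest point is the general uniqueness claim. As stressed in the introduction, the number of non-degenerate solutions of the Yang--Baxter equation grows with the matrix size, so one must genuinely rule out spurious unitary solutions rather than merely count branches. The obstruction concentrates in two places: proving that the system for the $|U_{ij}|^2$ has a single admissible root for arbitrary $n$, and handling coinciding eigenvalues, where the additional rotational freedom inside a degenerate eigenspace reintroduces a continuous ambiguity not fixed by (\ref{YB_knots}) alone. For this reason I expect a complete proof only for $N=2$ via the explicit route above, with the $N>2$ statement remaining conjectural and supported by the finite-size closed-form solutions.
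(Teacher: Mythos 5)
Your proposal correctly identifies the two-tier structure of the problem: the statement is a conjecture in the paper, proved there only in the $U_q(sl_2)$ specialization, and your $U_q(sl_2)$ outline --- compute the normalized spectra as functions of the spins, show that two configurations share a spectrum precisely when related by argument permutations and the Regge transformation, then invoke the known $6j$-symbol symmetries together with invariance of the normalization in (\ref{6j_def}) --- is in outline exactly the route of Sections \ref{S3}--\ref{S4}. The problem is that your step (iii) \emph{is} the theorem, and you give no mechanism for proving it. The paper's actual work is concentrated in two places you skip over. First, Proposition \ref{P1} converts the quadratic eigenvalue conditions (differences of $\varkappa$-exponents must be $\alpha$-independent) into the linear system \eqref{ev_sl2} on the parameters $d,x_0,y_0,z_0$, using the monotonicity of $\varkappa$ on the relevant two-row diagrams and handling the signs $\epsilon_{X_\alpha}=(-1)^\alpha$ of the eigenvalues. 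Second, and crucially, $d,x_0,y_0,z_0$ are only \emph{piecewise} linear in the spins (minima and maxima of linear functions, as in \eqref{k2}), so the system \eqref{sys_sl2} cannot be solved uniformly: the paper must split into the $8$ fusion types of Table \ref{tab_sl2} and solve all $64$ type-pairs, verifying that the solution set is exactly the $8$-element group $\Omega$ generated by two permutations and the Regge involution, and that degeneracies $d_i=d_j$ cause no ambiguity. Without this reduction and case analysis, ``show that equal spectra occur precisely under permutations and Regge'' is a restatement of the desired result, not a proof of it.

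The first half of your proposal is also misdirected relative to how the paper proceeds. You propose to prove that the matrix equation $\Lambda M\Lambda=M\Lambda M$, $M=U^{\dagger}\Lambda U$, has a unique admissible solution in the gauge-invariant unknowns $|U_{ij}|^2$; but the paper's introduction stresses that the Yang--Baxter equation, solved for $U$, has \emph{many} non-degenerate solutions, with the number growing with the matrix size, so no such uniqueness can hold from \eqref{YB_knots} alone. The paper never attempts this route, even for $N=2$: its proof uses essentially that both matrices being compared are genuine Racah matrices of representation-theoretic origin, so that their spectral data are constrained by fusion rules, and then classifies when those data coincide. You do concede this obstruction at the end, but as written that portion of the proposal does no work toward the statement; the burden of proof sits entirely on the unexecuted combinatorial classification described above.
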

In this work we are interested in a link case of this relation with a 3-strand braid, where the situation is a bit different. Every strand can carry its own representation in a link whereas in a knot there is only one representation. The initial order of strands may be arbitrary, so 3 different equations arise that correspond to different initial ordering of the representations. There are three equations that can be written down as:
\begin{align}
U_{xyz}:\ \ (R_{x} \otimes R_{y}) \otimes R_{z} &\rightarrow R_{x} \otimes (R_{y} \otimes R_{z}) \hspace{10mm} U_{xyz}^\dagger=U_{xyz}^{-1}\nonumber\\
\hat{\mathcal{R}}_{(xy)z}:\ \ (R_{x}\otimes R_{y})\otimes R_{z} &\rightarrow (R_{y}\otimes R_{x})\otimes R_{z} \nonumber\\
\hat{\mathcal{R}}_{(xy)z} \hat{\mathcal{R}}_{y(xz)} \hat{\mathcal{R}}_{(yz)x} &= \hat{\mathcal{R}}_{x(yz)} \hat{\mathcal{R}}_{(xy)z} \hat{\mathcal{R}}_{z(xy)} \nonumber\\
\hat{\mathcal{R}}_{(yz)x} \hat{\mathcal{R}}_{z(yx)} \hat{\mathcal{R}}_{(zx)y} &= \hat{\mathcal{R}}_{y(zx)} \hat{\mathcal{R}}_{(yx)z} \hat{\mathcal{R}}_{x(yz)}\\
\hat{\mathcal{R}}_{(zx)y} \hat{\mathcal{R}}_{x(zy)} \hat{\mathcal{R}}_{(xy)z} &= \hat{\mathcal{R}}_{z(xy)} \hat{\mathcal{R}}_{(zy)x} \hat{\mathcal{R}}_{y(zx)} \nonumber
\end{align}

Let us choose the basis in which $\hat{\mathcal{R}}_{(xy)z}$ is diagonal, then $\hat{\mathcal{R}}_{x(yz)}$ may be not diagonal, but we can reexpress it as $\hat{\mathcal{R}}_{x(yz)} = U_{xyz}^\dagger\hat{\mathcal{R}}_{x(yz)}'U_{xyz}$ where $\hat{\mathcal{R}}_{x(yz)}'$ is diagonal. Diagonalizing all $\hat{\mathcal{R}}$-matrices, we obtain:
\begin{align}\label{YB_links}
\hat{\mathcal{R}}_{(xy)z}U_{yxz}^{\dagger} {\hat{\mathcal{R}}}_{y(xz)}' \nonumber U_{yzx}\hat{\mathcal{R}}_{(yz)x}&=U_{xyz}^{\dagger} {\hat{\mathcal{R}}}_{x(yz)}' U_{xzy} \hat{\mathcal{R}}_{(xz)y} U_{zxy}^\dagger {\hat{\mathcal{R}}}_{z(xy)}' U_{zyx}\\
\hat{\mathcal{R}}_{(yz)x} U_{zyx}^\dagger{\hat{\mathcal{R}}}_{z(yx)}' U_{zxy}\hat{\mathcal{R}}_{(zx)y}&= U_{yzx}^\dagger {\hat{\mathcal{R}}}_{y(zx)}' U_{yxz} \hat{\mathcal{R}}_{(yx)z} U_{xyz}^\dagger {\hat{\mathcal{R}}}_{x(yz)}' U_{xzy}\\
\hat{\mathcal{R}}_{(zx)y} U_{xzy}^\dagger {\hat{\mathcal{R}}}_{x(zy)}'\nonumber U_{xyz}\hat{\mathcal{R}}_{(xy)z}&= U_{zxy}^\dagger {\hat{\mathcal{R}}}_{z(xy)}' U_{zyx}\hat{\mathcal{R}}_{(zy)x} U_{yzx}^\dagger {\hat{\mathcal{R}}}_{y(zx)}' U_{yxz}
\end{align}

The situation is similar to the knot case, so we can generalize the eigenvalue conjecture for links. It is convenient to use the notation $\mathcal{\hat{R}}_{1} \sim \mathcal{\hat{R}}_{2}$ if two $\mathcal{\hat{R}}$-matrices has the same set of normalized  eigenvalues. Note, that there are 12 different $\mathcal{\hat{R}}$-matrices in (\ref{YB_links}), but only three of them have different sets of eigenvalues. We denote these three $\mathcal{\hat{R}}$-matrices as
$L = (\mathcal{\hat{R}}_1, \mathcal{\hat{R}}_2, \mathcal{\hat{R}}_3)$.
$L$ and $L'$ are called equivalent if $ \forall i \ \mathcal{\hat{R}}_i \sim \mathcal{\hat{R}}_i'$, this can be notated as $L \sim L'$.

 We emphasize the fact that if we have $L\sim L'$, then $L$ consists of a triple of $\mathcal{\hat{R}}$-matrices that differs from the other triple $L'$ only by normalization and a particular unitary rotation.  Hence, it can be said that we use normalization $ \det \mathcal{\hat{R}}_i  = 1$ and look for equal sets of eigenvalues for each $i$. 

\begin{conj}[Eigenvalue hypothesis for link case]\label{C2}
Given two equivalent lists of $\mathcal{\hat{R}}$-matrices $L \sim L'$, the corresponding Racah matrices (that have the same domain) are equal in the bases, where $\hat{\mathcal{R}}$-matrices are diagonal.
\end{conj}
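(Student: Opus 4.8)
The plan is to recast the equivalence $L\sim L'$ as a rigidity statement for the Yang-Baxter system (\ref{YB_links}) and then argue that its non-degenerate solutions are forced. First I would fix, on each side, the basis in which the three inequivalent $\hat{\mathcal{R}}$-matrices are brought to the diagonal form dictated by their eigenvalues; after the normalization $\det\hat{\mathcal{R}}_i=1$ the hypothesis $L\sim L'$ means that these diagonal matrices literally coincide for $L$ and $L'$. The only remaining unknowns are the Racah matrices $U_{xyz}$ and their argument-permuted relatives, constrained to be unitary ($U^\dagger=U^{-1}$) and to satisfy the three coupled braid relations (\ref{YB_links}). The entire content of the conjecture is then that, with the diagonal $\hat{\mathcal{R}}$'s held fixed, this system determines each $U$ up to the residual freedom $U\mapsto D_1 U D_2^{-1}$ by diagonal phase matrices (the phases of the one-dimensional eigenspaces), a freedom that cancels in the normalized 6j-symbols. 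So the core task is uniqueness of the non-degenerate solution of (\ref{YB_links}).

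Next I would treat the one genuinely tractable algebra, $U_q(sl_2)$, which is where a complete proof exists. Here every eigenvalue of $\hat{\mathcal{R}}$ is $\pm q^{c_X}$ with the exponent $c_X$ controlled by the quadratic Casimir of the intermediate spin $X$, so that equality of normalized eigenvalue sets translates into a short, explicit list of arithmetic conditions on the spins. On the other side, each matrix element is given in closed form by the $q$-hypergeometric ${}_4\Phi_3$ expression of \cite{KR}. I would then show that the arithmetic conditions extracted from $L\sim L'$ are \emph{exactly} the conditions under which the two explicit Racah matrices coincide, by matching them against the generators of the classical symmetry group — the two argument permutations and the Regge symmetry. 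Checking both implications directly on the ${}_4\Phi_3$ formula is what establishes, for $sl_2$, that the eigenvalue hypothesis is equivalent to the 6j-symbol symmetries, and hence proves the conjecture in that case.

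For the reduction in the general multiplicity-free setting I would separate two things that are easy to conflate. The eigenvalue-matching condition itself is linear: since the $\hat{\mathcal{R}}$-eigenvalues are monomials $q^{c}$ whose exponents $c$ are affine in the parameters labelling the representations, the requirement $L\sim L'$ becomes a finite linear system on those parameters whose solutions enumerate the candidate symmetries (this is the method of \cite{sleptsov_new_sym}). The determination of the matrix from the eigenvalues, by contrast, rests on the base cases in which (\ref{YB_links}) has already been solved explicitly: blocks of size up to $5\times5$ and $6\times6$ in \cite{Ev_Hyp,Universality} and the $3\times3$ link matrices in \cite{cabling}. Intersecting those explicit solutions with unitarity leaves a single matrix modulo diagonal phases, which verifies the conjecture block-by-block in those sizes.

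The hard part will be the uniqueness claim in the degenerate direction. The Yang-Baxter system admits many solutions, and their number grows with the matrix size, so ruling out spurious non-degenerate branches is not automatic. For $sl_2$ this is bypassed entirely by the closed-form ${}_4\Phi_3$ answer, which turns the equivalence with the symmetry group into a finite verification; for general $U_q(sl_N)$ no such universal formula is available, and proving that the non-degenerate solution stays unique beyond the tabulated small sizes is precisely the step that resists a general argument. My proposal is therefore to prove the statement rigorously for $U_q(sl_2)$ and for the multiplicity-free small-size blocks, and to isolate uniqueness-of-non-degenerate-solutions as the single genuine obstacle that keeps the $N>2$ statement a conjecture rather than a theorem.
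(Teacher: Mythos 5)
Your proposal follows essentially the same route as the paper: the paper likewise leaves Conjecture~\ref{C2} open in general, reduces equality of normalized eigenvalue sets to a linear system on $d,x_0,y_0,z_0$ (Proposition~\ref{P1} and its corollary), and proves the $U_q(sl_2)$ case by showing every solution of that system is a composition of the two argument permutations and the Regge symmetry, i.e.\ an element of the known symmetry group $\Omega$ — exactly your plan of matching the arithmetic conditions against the classical symmetries, with uniqueness of non-degenerate Yang--Baxter solutions for $N>2$ isolated as the remaining obstruction. The only differences are minor: the paper invokes the $sl_2$ symmetries as known facts rather than re-deriving them from the ${}_4\Phi_3$ formula, and the step you compress into ``matching against the generators'' is where the paper's actual work lies, namely the piecewise-linear case analysis (8 fusion types, 64 linear systems) forced by the min/max structure of $d,x_0,y_0,z_0$.
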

Let us now reformulate the hypothesis in the case of symmetric representations in a more convenient way.

\section{Eigenvalue hypothesis for symmetric representations} \label{S3}

We consider $R_1,R_2,R_3$ to be symmetric representations of $U_q(sl_N)$ and representation $R_4\subset R_1\otimes R_2 \otimes R_3$. We denote by $X,Y,Z$ all irreducible representations that satisfy the following fusion rules:
\begin{align}
\label{tenzexp}
R_1\otimes R_2 =& \mathop{\oplus}\limits_{\alpha=0}^{d{(X)}} X_\alpha \nonumber \\
R_2\otimes R_3 =& \mathop{\oplus}\limits_{\beta=0}^{d{(Y)}} Y_\beta \\
R_1\otimes R_3 =& \mathop{\oplus}\limits_{\gamma=0}^{d{(Z)}} Z_\gamma  \nonumber
\end{align}
In order to get non-zero 6j-symbols we have to impose the following conditions on these representations:
\begin{align}
\label{conds}
R_{4} \subset X_\alpha\otimes R_3\nonumber  \\
R_{4} \subset Y_\beta\otimes R_1 \\
R_{4} \subset Z_\gamma\otimes R_2 \nonumber
\end{align}
Then, depending on $R_4$, not all summands of the RHS of expansions (\ref{tenzexp}) satisfy conditions (\ref{conds}). According to representation theory, the number of irreducible components for each fusion in (\ref{tenzexp}) that satisfies (\ref{conds}) is the same, and we will denote this number as $d$. Note, that in (\ref{tenzexp}) we consider a tensor product of two symmetric representations. Therefore, Young diagrams of the resulting irreducible representations have the number of rows not greater than two. So we can introduce the following notation for Young diagrams of $X,Y,Z$:
\begin{align}\label{xyz}
X_\alpha &= [\mu_{X}-\alpha,\mu_{X} - \delta_X+ \alpha],\ \ 0 \le \alpha \le d \nonumber\\
Y_\beta &= [\mu_{Y}-\beta,\mu_{Y} - \delta_Y+ \beta],\ \ 0 \le \beta \le d\\
Z_\gamma &= [\mu_{Z}-\gamma,\mu_{Z} - \delta_Z+ \gamma],\ \ 0 \le \gamma \le d\nonumber
\end{align}
where $\mu$'s and $\delta$'s are non-negative integers and depend on a particular choice of $R_1,R_2,R_3,R_4$. For $U_q(sl_2)$ representations, we will use variables $x_\alpha = \delta_X-2\alpha$ that corresponds to $X_\alpha$ with columns of the height two removed, the same for $Y$ and $Z$.

As an example of our notation, below we write down the Young diagram sequence with $R_1 = [4], R_2 = [3], R_3 = [2], R_4 = [6,3]$ for $X_\alpha$. The parametrisation is $d=2, \mu_X=6, \delta_X = 5$. The gray color of elements denotes the universal part of all diagrams, while the remaining parts differ among the $\alpha$ range. For $U_q(sl_2)$, $\{x_\alpha\}=\{5,3,1\}$. For $U_q(sl_N)$, sequence $X_\alpha$ is:
\definecolor{gr}{RGB}{180, 180, 180}

$$\ytableausetup{boxsize=1em}\ytableausetup{aligntableaux=center}
\left\{\ytableaushort{}* {6,1}* [*(gr)]{4,1} \ , \ \ytableaushort{}* {5,2}* [*(gr)]{4,1} \ , \ \ytableaushort{}* {4,3}* [*(gr)]{4,1}\right\}$$
\begin{conj}\label{C3}
	If there are two\footnote{All parameters, corresponding to the second matrix we will label by $\ \tilde{ }$.} Racah matrices $U=U\left[ \begin{matrix}
		R_1 & R_2 \\
		R_3 & R_4
	\end{matrix} \right]$ and $\widetilde{U} = U\left[ \begin{matrix}
	\widetilde{R}_1 & \widetilde{R}_2 \\
	\widetilde{R}_3 & \widetilde{R}_4
\end{matrix} \right]$ with symmetric representations $R_i$, $\widetilde{R}_i$, $i=1..3$, specified in (\ref{tenzexp}), (\ref{conds}), (\ref{xyz}) and the following conditions are satisfied:
	\begin{equation}
		\boxed{\label{ev_slN}
			\begin{cases}
				d = \widetilde{d}\\
				\delta_X = \widetilde{\delta}_X\\
				\delta_Y = \widetilde{\delta}_Y\\
				\delta_Z = \widetilde{\delta}_Z\\
		\end{cases}}
	\end{equation}
then corresponding Racah matrices $U$ and $\widetilde U$ are equal.
\end{conj}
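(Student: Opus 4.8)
The plan is to reduce this statement to the link eigenvalue hypothesis (Conjecture \ref{C2}) by showing that the four boxed conditions in (\ref{ev_slN}) are exactly a restatement of the equivalence $L\sim L'$ of the two triples of $\hat{\mathcal{R}}$-matrices. As already noted in the text, among the twelve $\hat{\mathcal{R}}$-matrices entering the Yang--Baxter relations only three carry distinct spectra, and these are the ones diagonalized on the three fusions $R_1\otimes R_2=\oplus_\alpha X_\alpha$, $R_2\otimes R_3=\oplus_\beta Y_\beta$, $R_1\otimes R_3=\oplus_\gamma Z_\gamma$. Hence it suffices to compute the normalized spectrum of each of these three matrices and check that it is governed only by the pair $(d,\delta_X)$, $(d,\delta_Y)$, $(d,\delta_Z)$ respectively.

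First I would use the standard expression for the eigenvalue of $\hat{\mathcal{R}}$ on an irreducible component $Q\subset P\otimes P'$, namely $\lambda_Q=\epsilon_Q\, q^{\varkappa_Q}$ with $\varkappa_Q$ proportional to $C_2(Q)-C_2(P)-C_2(P')$ ($C_2$ the quadratic Casimir) and $\epsilon_Q=\pm1$ the symmetry sign. Specializing to the $X$-fusion with the two-row parametrization $X_\alpha=[\mu_X-\alpha,\ \mu_X-\delta_X+\alpha]$ from (\ref{xyz}), one computes $C_2(X_\alpha)$ directly; since the total number of boxes $2\mu_X-\delta_X$ is fixed, the only $\alpha$-dependent part of $C_2(X_\alpha)$ is the quadratic $2\alpha^2-2(\delta_X+1)\alpha$. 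Everything else -- the terms in $\mu_X$ together with $C_2(R_1)+C_2(R_2)$ -- is independent of $\alpha$ and therefore merely multiplies the whole spectrum by one common power of $q$.

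The decisive step is then the normalization: passing to the ratios $\lambda_\alpha/\lambda_0$ (equivalently, imposing $\det\hat{\mathcal{R}}=\prod_\alpha\lambda_\alpha=1$) cancels this common factor, leaving $\lambda_\alpha/\lambda_0=(-1)^\alpha\,q^{\,\mathrm{const}\cdot(2\alpha^2-2(\delta_X+1)\alpha)}$, which depends only on $\delta_X$ and on the range $0\le\alpha\le d$. Note that the global sign $(-1)^{\mu_X-\delta_X}$ attached to the $\epsilon_{X_\alpha}$ also drops out of these ratios, so that the residual alternation $(-1)^\alpha$ is itself fixed by $d$ and $\delta_X$. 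The identical argument applied to the $Y$- and $Z$-fusions shows their normalized spectra are determined by $(d,\delta_Y)$ and $(d,\delta_Z)$. Consequently $L\sim L'$ holds precisely when $d=\widetilde d$, $\delta_X=\widetilde\delta_X$, $\delta_Y=\widetilde\delta_Y$, $\delta_Z=\widetilde\delta_Z$, and applying Conjecture \ref{C2} to the equivalent lists yields $U=\widetilde U$.

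The main obstacle I anticipate is bookkeeping of the eigenvalue signs: I must confirm not only that the global sign is absorbed by the same scalar normalization as the $q$-power, but that after fixing $\det=1$ the two spectra coincide as sets rather than merely up to a $(d{+}1)$-th root of unity, so that Conjecture \ref{C2} genuinely applies. A more conceptual caveat is that this argument is only a reduction: the conclusion rests on the still-unproven Conjecture \ref{C2}, so Conjecture \ref{C3} is rigorous exactly where Conjecture \ref{C2} is -- in particular in the $U_q(sl_2)$ case established later in the paper -- and otherwise inherits its conjectural status.
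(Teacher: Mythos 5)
Your proposal is correct and takes essentially the same route as the paper's own argument (the proof of Proposition \ref{P1}): both express the $\hat{\mathcal{R}}$-eigenvalues through $\varkappa$ (the quadratic Casimir), isolate the $\alpha$-dependent part $2\alpha^2-2(\delta+1)\alpha$ of the two-row diagrams, note that normalization cancels the common $\alpha$-independent factor while the signs $(-1)^\alpha$ automatically agree, and thereby identify conditions (\ref{ev_slN}) with the spectral equivalence $L\sim L'$, so that Conjecture \ref{C2} yields $U=\widetilde U$. The only cosmetic difference is that you normalize via the ratios $\lambda_\alpha/\lambda_0$ where the paper requires $\varkappa(X_\alpha)-\varkappa(\widetilde X_\alpha)=\mathrm{const}$, which is the same condition, and your worry about a residual $(d{+}1)$-th root of unity dissolves since proportional spectra remain equal after the $\det\hat{\mathcal{R}}=1$ rescaling.
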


\begin{statement} \label{P1}
	Conjecture \ref{C2} in the case of 3 symmetric incoming representations $R_1,R_2,R_3$ is equivalent to Conjecture \ref{C3}.
\end{statement}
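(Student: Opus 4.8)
The plan is to observe that Conjectures \ref{C2} and \ref{C3} have literally the same conclusion (``the two Racah matrices are equal''), so proving the equivalence of the two statements reduces to proving that their hypotheses coincide; that is, that the relation $L\sim L'$ of Conjecture \ref{C2} is equivalent to the boxed system \eqref{ev_slN}. First I would unwind the link Yang--Baxter system \eqref{YB_links}: although it contains twelve $\hat{\mathcal{R}}$-matrices, by the paper's own count only three carry inequivalent eigenvalue sets, and these are exactly the three pairwise fusions $R_1\otimes R_2$, $R_2\otimes R_3$, $R_1\otimes R_3$, i.e.\ the channels $X$, $Y$, $Z$ of \eqref{tenzexp}. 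Since the relevant Racah matrix acts only on the $d+1$ intermediate channels singled out by the conditions \eqref{conds}, the eigenvalues entering $L$ are precisely the $d+1$ numbers attached to $X_\alpha$, $Y_\beta$, $Z_\gamma$ with $0\le\alpha,\beta,\gamma\le d$. Thus $L\sim L'$ decouples into three independent matchings, one per channel, and it suffices to analyse a single channel, say $X$, and then repeat.

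For the channel $X$ I would invoke the standard eigenvalue formula (\cite{Mironov_I,klimyk}): the $\hat{\mathcal{R}}$-eigenvalue on an irreducible $Q\subset R_1\otimes R_2$ equals $\epsilon_Q\,q^{\varkappa_Q}$ up to an $\alpha$-independent factor $q^{-\varkappa_{R_1}-\varkappa_{R_2}}$, where $\varkappa_Q=\sum_{(i,j)\in Q}(j-i)$ is the content sum and $\epsilon_Q=\pm1$ alternates with the length of the second row. For the two-row diagram $X_\alpha=[\mu_X-\alpha,\mu_X-\delta_X+\alpha]$ a direct computation gives $\varkappa_{X_\alpha}=c_X+\tfrac{1}{4}\big((x_\alpha+1)^2-1\big)$, where $x_\alpha=\delta_X-2\alpha$ and $c_X$ depends on $\mu_X,\delta_X$ but not on $\alpha$. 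Hence the entire $\alpha$-dependence is carried by $x_\alpha$, which involves $\delta_X$ but \emph{not} $\mu_X$. Normalising $\det\hat{\mathcal{R}}=\prod\lambda=1$ subtracts the $\alpha$-average from the exponents, erasing the constant $c_X$; the key consequence is that the normalised eigenvalue multiset of $\hat{\mathcal{R}}_X$ depends only on the pair $(d,\delta_X)$. This settles the direction $\eqref{ev_slN}\Rightarrow L\sim L'$.

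The converse, that $(d,\delta_X)$ is recovered from the normalised multiset, uses the explicit arithmetic structure: $d+1$ is the cardinality, while $\varkappa_{X_\alpha}-\varkappa_{X_{\alpha+1}}=x_\alpha=\delta_X-2\alpha$ shows the successive exponent gaps form the arithmetic progression $\delta_X,\delta_X-2,\delta_X-4,\dots$, from which $\delta_X$ is read off (together with the alternating sign pattern $\epsilon_\alpha\sim(-1)^\alpha$ in the degenerate cases). Therefore $\hat{\mathcal{R}}_X\sim\hat{\widetilde{\mathcal{R}}}_X$ if and only if $d=\widetilde d$ and $\delta_X=\widetilde\delta_X$. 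Applying the identical argument to the channels $Y$ and $Z$ gives the analogous biconditionals, and conjoining the three yields exactly the system \eqref{ev_slN}. This establishes $L\sim L'\iff\eqref{ev_slN}$ and hence Proposition \ref{P1}.

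I expect the main obstacle to be the sign bookkeeping rather than the exponent computation, which is routine. The signs $\epsilon_\alpha=(-1)^{\mu_X-\delta_X+\alpha}$ carry an overall factor $(-1)^{\mu_X-\delta_X}$ that \emph{does} depend on $\mu_X$, a parameter left unconstrained by \eqref{ev_slN}; one must check carefully that this overall sign is absorbed by the determinant normalisation (whose scaling constant is fixed only up to a root of unity), so that only the alternating pattern, governed by $(d,\delta_X)$, survives. The same delicacy enters the converse: when the progression $x_\alpha+1=\delta_X+1-2\alpha$ straddles zero, distinct indices can share an exponent, and one has to use the signs to guarantee that distinct $(d,\delta_X)$ still yield distinct normalised signed-eigenvalue multisets. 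Verifying that the $d+1$ components compatible with $R_4$ form the consecutive block $\alpha=0,\dots,d$ used in \eqref{xyz} is a secondary point needed for the gap argument to apply cleanly.
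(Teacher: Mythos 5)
Your proposal is correct and follows essentially the same route as the paper's proof: both reduce the equivalence of the two conjectures to the equivalence of their hypotheses, use the $\varkappa$-formula for the $\hat{\mathcal{R}}$-eigenvalues to show that the normalized spectrum in each channel $X,Y,Z$ is a quadratic function of $\alpha$ whose $\alpha$-dependence is governed solely by $(d,\delta)$, and then recover $(d,\delta)$ from the normalized multiset (the paper via constancy of $\varkappa(X_\alpha)-\varkappa(\widetilde{X}_\alpha)$ in $\alpha$, you via the consecutive exponent gaps $x_\alpha=\delta-2\alpha$ --- the same computation in different clothing). Your extra bookkeeping on signs and exponent collisions corresponds to the paper's remarks that $\epsilon=(-1)^\alpha$ can be neglected and that $d\le\delta/2$ forces all eigenvalues in a channel to be distinct.
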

\begin{proof}
	Given $R_1,R_2,R_3, R_4$, we are able to find all eigenvalues. Firstly, we find sequences $X_\alpha,Y_\beta,Z_\gamma$, then each representation from these sequences corresponds to an eigenvalue. There are known expression for eigenvalues of $\hat{\mathcal{R}}$-matrices \cite{Mironov_I}\cite{klimyk}:
	\begin{align}
	\begin{split}
	\label{ev_def}
	\lambda_{X_\alpha} &= \epsilon_{X_\alpha}q^{\varkappa(X_\alpha)-\varkappa(R_1) - \varkappa(R_2)} \\
	\varkappa(\lambda) = \sum_{i} \frac{\lambda_i}{2}(\lambda_i&-2i+1) \hspace{10mm} R_1\otimes R_2 = \oplus_{\alpha=0}^d  X_\alpha
	\end{split}
	\end{align}
	where $\epsilon$ is either $1$ or $-1$. It is known from a lot of examples that $\epsilon$ in the case of symmetric representations is just $(-1)^\alpha$. Therefore, for equal number of normalized eigenvalues they always coincide. Thus, below we will neglect the sign since it does not affect the proof.
	
	Let three sets of normalised eigenvalues $ \left( \lambda_{X_\alpha}, \lambda_{Y_\beta}, \lambda_{Z_\gamma} \right) $  be equal to the other sets $ ( \lambda_{\widetilde{X}_\alpha}, \lambda_{\widetilde{Y}_\beta}, \lambda_{\widetilde{Z}_\gamma} ) $. We will consider only one equation $\lambda_{X_\alpha} = \lambda_{\widetilde{X}_\alpha}$, the others can be solved in the same way. If we divide each element of the first sequence by the element from the second sequence, we should get the same value for all $\alpha$. From equation (\ref{ev_def}) we can see that the only variable depending on $\alpha$ is $\varkappa(X_\alpha)$. In other words, normalized eigenvalues are the same if the difference between $\varkappa(X_\alpha)$ and $\varkappa(\widetilde{X}_\alpha)$ does not depend on $\alpha$:
	\begin{equation}
	\varkappa(X_\alpha) - \varkappa(\widetilde{X}_{\alpha}) = \text{const}
	\end{equation}
	If there is more than one eigenvalue in $\hat{\mathcal{R}}$-matrix ($d>0$), this equation gives nontrivial conditions. We can use a monotonic property of the function $\varkappa$ for symmetric representations.  Let us consider an arbitrary symmetric representation product as a sequence of Young diagrams $X_\alpha = [\mu - \alpha,\mu - \delta + \alpha]$ where $0\le\alpha\le d$:
	
	\begin{equation}
	2\varkappa(X_\alpha) = \sum_j \mu_j(\alpha)(\mu_j(\alpha)-2j+1)
	= 2\alpha^2 - 2(1+\delta)\alpha + \text{const}
	\end{equation}
	The function is constantly decreasing from $\alpha=0$ to $\alpha=\frac{1+\delta}{2}$. It is obvious that in Young diagrams $d\le\frac{\delta}{2}$, in the other case the second row will be larger than the first one. In this case all eigenvalues are different. It allows us to write down the ordered sequences of Young diagrams
	$\mu(\alpha)= [\mu- \alpha,\mu-\delta+\alpha] $, $\widetilde{\mu}(\alpha) = [\widetilde{\mu} - \alpha, \widetilde{\mu} - \widetilde{\delta} +\alpha] $. The difference of $\varkappa$ should be constant:
	\begin{equation}
	\varkappa(X_\alpha) - \varkappa(\widetilde{X}_\alpha) = (\widetilde{\delta}- \delta)\alpha + \text{const}
	\end{equation}
	As a result, we get the following situation: if the number of eigenvalues is greater than 1, then the eigenvalue conjecture conditions requires both $\delta$ to be the same for two diagrams $X_0$, $\widetilde{X}_0$. Also, the equality of $d$ is needed to make the number of elements equal. Obviously, given equal $\delta$ and $d$ for the pair of matrices, the hypothesis conditions are satisfied. In $U_q(sl_N),N>2$ case this is equivalent to conditions:
	\begin{equation}
	[\mu,\mu-\delta]=[\widetilde{\mu} +C, \widetilde{\mu}-\widetilde{\delta} +C], \hspace{5mm} d=\widetilde{d}
	\end{equation}
	Or in more compact form:
	\begin{equation}
	\begin{cases}
	d=\widetilde{d}\\
	\delta = \widetilde{\delta}
	\end{cases}
	\end{equation}
	For the other two fusions the proof is the same. $d$ is equal for all three tensor products as it can be shown from representation theory.
\end{proof}
\begin{cor}
	In the $U_q(sl_2)$ case system (\ref{ev_slN}) reduces as follows:
	\begin{equation}
	\begin{cases}
	d = \widetilde{d}\\
	x_0 = \widetilde{x}_{0}\\
	y_0 = \widetilde{y}_{0}\\
	z_0 = \widetilde{z}_{0}
	\end{cases}
	\end{equation}
\end{cor}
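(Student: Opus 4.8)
The plan is to treat the corollary as a pure change of variables: Proposition~\ref{P1} has already reduced the eigenvalue matching condition to $\delta_X=\widetilde\delta_X$, $\delta_Y=\widetilde\delta_Y$, $\delta_Z=\widetilde\delta_Z$ together with $d=\widetilde d$ for every $N\ge 2$, so for $N=2$ all that remains is to rewrite these $\delta$'s in the single-row labels natural to $U_q(sl_2)$. I would therefore not repeat the Casimir computation of Proposition~\ref{P1}, but only specialize its conclusion.

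First I would recall the $U_q(sl_2)$ parametrization fixed just after~(\ref{xyz}): from each two-row diagram $X_\alpha=[\mu_X-\alpha,\mu_X-\delta_X+\alpha]$ one strips the height-two columns (full columns, hence the trivial determinant representation in $sl_2$) and keeps only the single-row label $x_\alpha=\delta_X-2\alpha$, and likewise $y_\beta=\delta_Y-2\beta$ and $z_\gamma=\delta_Z-2\gamma$. Evaluating at the base index $\alpha=\beta=\gamma=0$ gives at once
\begin{equation}
x_0=\delta_X,\qquad y_0=\delta_Y,\qquad z_0=\delta_Z .
\end{equation}
Substituting these three identifications into the system~(\ref{ev_slN}) turns $\delta_X=\widetilde\delta_X$ into $x_0=\widetilde x_0$, and similarly for $Y$ and $Z$, while the condition $d=\widetilde d$ is transcribed unchanged. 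This already produces the reduced system asserted in the corollary.

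The only point requiring a moment's attention---rather than a genuine obstacle---is to check that passing to the single-row labels discards nothing the hypothesis constrains. This is safe on two counts: in $sl_2$ the removed height-two columns are full columns carrying the trivial representation, so $X_\alpha$ and $[x_\alpha]$ name the same $sl_2$ irreducible; and Proposition~\ref{P1} has already shown that the surviving constraints involve only $\delta$ and $d$, with the parameters $\mu_X,\mu_Y,\mu_Z$ eliminated. Hence $x_0,y_0,z_0$ together with $d$ carry exactly the data the eigenvalue hypothesis restricts, and the passage from~(\ref{ev_slN}) to the stated $sl_2$ system is an equivalence in both directions.
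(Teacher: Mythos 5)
Your proposal is correct and matches the paper's (implicit) reasoning: the paper treats this corollary as an immediate consequence of the convention $x_\alpha=\delta_X-2\alpha$ introduced after~(\ref{xyz}), so that $x_0=\delta_X$, $y_0=\delta_Y$, $z_0=\delta_Z$ and the system~(\ref{ev_slN}) transcribes verbatim into the stated $sl_2$ form. Your additional check that stripping height-two columns loses no information (since they carry the trivial $sl_2$ representation, and Proposition~\ref{P1} already eliminated the $\mu$'s) is exactly the point that makes the substitution an equivalence, so nothing is missing.
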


Thus, Conjecture \ref{C3} in the case of $U_q(sl_2)$ takes the following form.
\begin{conj}\label{C4}
	If there are two Racah matrices $U=U\left[ \begin{matrix}
	[r_1] & [r_2] \\
	[r_3] & [r_4]
	\end{matrix} \right]$ and $\widetilde{U} = U\left[ \begin{matrix}
	[\widetilde{r}_1] & [\widetilde{r}_2] \\
	[\widetilde{r}_3] & [\widetilde{r}_4]
	\end{matrix} \right]$ with symmetric representations $[r_i]$, $[\widetilde{r}_i]$, $i=1..3$, specified in \eqref{tenzexp}-\eqref{xyz}\footnote{$R_i=[r_i], \ \widetilde{R}_i=[\widetilde{r}_i]$} and the following conditions are satisfied:
	\begin{equation}
	\boxed{\label{ev_sl2}
		\begin{cases}
	d = \widetilde{d}\\
	x_0 = \widetilde{x}_{0}\\
	y_0 = \widetilde{y}_{0}\\
	z_0 = \widetilde{z}_{0}
		\end{cases}}
	\end{equation}
	then corresponding Racah matrices $U$ and $\widetilde U$ are equal.
\end{conj}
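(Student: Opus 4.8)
The plan is to prove Conjecture \ref{C4} by showing that the solution set of the linear system (\ref{ev_sl2}) is exactly an orbit of the classical matrix symmetries of the $U_q(sl_2)$ 6j-symbols, and then to invoke the fact that those symmetries equate whole Racah matrices. By the Corollary the four numbers $(d,x_0,y_0,z_0)$ are precisely the normalized-eigenvalue data attached to $U$, so I regard them as functions of the four spins $j_i=r_i/2$ through the fusion-and-constraint rules (\ref{tenzexp})--(\ref{conds}). I would then argue that any $\widetilde U$ sharing the same $(d,x_0,y_0,z_0)$ has labels $(\widetilde r_i)$ lying in the orbit of $(r_i)$ under the group of order $8$ generated by the Regge symmetry and the two argument-permutation (tetrahedral) symmetries. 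Since each of these is an established identity of $U_q(sl_2)$ 6j-symbols that lifts to an equality of full matrices --- the normalization factors $\sqrt{\dim X_i\,\dim Y_j}$ coinciding on the two sides, as already noted in the Introduction --- matching invariants forces $U=\widetilde U$.

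The concrete steps are the following. First I would make the invariants explicit, $x_0=2\min(j_1+j_2,\,j_3+j_4)$, $y_0=2\min(j_2+j_3,\,j_1+j_4)$, $z_0=2\min(j_1+j_3,\,j_2+j_4)$, together with $d=\tfrac{1}{2}x_0-\max(|j_1-j_2|,|j_3-j_4|)$ and its two analogues; the requirement that the three expressions for $d$ agree is the representation-theoretic fact used in Proposition \ref{P1}. Second, I would solve (\ref{ev_sl2}) for $(\widetilde j_i)$, treating each nested $\min$ and $\max$ as a finite branching into chambers and collecting every admissible branch. Third, I would write down the explicit action of the Regge reflection and of the two generating permutations on $(j_1,j_2,j_3,j_4)$ and check by direct substitution that each fixes $(d,x_0,y_0,z_0)$, which gives the inclusion of the orbit into the solution set. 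Fourth, I would verify that each symmetry carries the index ranges $\{X_\alpha\}$, $\{Y_\beta\}$ of $U$ bijectively onto those of $\widetilde U$ (with a transposition in the cases where the two index slots are interchanged), so that the two matrices have literally identical entries after this relabeling.

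The main obstacle is the completeness direction: proving that the system (\ref{ev_sl2}) produces no label-tuples beyond the eight furnished by the symmetry group. Because the invariants are built from nested $\min$/$\max$ expressions, a priori many sign and branch choices could realise the same $(d,x_0,y_0,z_0)$, and one must show that the triangle inequalities and non-negativity constraints collapse them to a single finite orbit rather than an unbounded family. I would handle this by using the differences $y_0-z_0$ and $x_0-y_0$, which fix the pairwise differences among $j_1,j_2,j_3$ in each chamber, together with the common value of $d$, to pin the spins down up to the discrete ambiguity that is exactly absorbed by the permutations and the Regge reflection. A secondary, analytic point is to justify the matrix equalities in the third step at the quantum level: the Regge symmetry for $U_q(sl_2)$ rests on a balanced ${}_4\Phi_3$ transformation of the closed-form expression for the coefficients \cite{KR}, and I would cite or reprove that transformation to close the argument.
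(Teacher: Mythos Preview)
Your proposal is correct and follows essentially the same strategy as the paper: express the invariants $(d,x_0,y_0,z_0)$ as piecewise-linear functions of the $r_i$, branch into chambers determined by the $\min$/$\max$ choices, solve the resulting linear systems, and identify every solution with a composition of the Regge reflection and the two matrix-preserving tetrahedral permutations. The paper organizes the chamber analysis slightly differently from what you sketch: rather than attacking completeness via the differences $y_0-z_0$, $x_0-y_0$, it tabulates eight \emph{fusion types} (one for each possible argument of the $\min$ defining $d$), observes that the type simultaneously fixes the linear forms of $x_0,y_0,z_0$, and then solves the $8\times 8=64$ pairings $(i,j)$ of LHS/RHS types as honest $4\times 4$ linear systems. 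Each of the $64$ solutions is read off as one of the eight group elements, which disposes of your ``main obstacle'' by exhaustion rather than by an indirect argument on differences. Your final analytic step---invoking the balanced ${}_4\Phi_3$ identity from \cite{KR} for the quantum Regge symmetry---is exactly what the paper does as well.
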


This approach also can be applied to the $U_q(sl_N)$ case when representations are symmetric and conjugate to symmetric, namely exclusive case. The analogue of Proposition \ref{P1} for the exclusive case is proven in a similar manner. Instead of parametrization (\ref{xyz}), $X_\alpha$ is defined as $ [\mu_1-\alpha, (\mu_2)^{N-2}, \mu_N+ \alpha ]$, and $\delta$ should be defined as $\mu_1 - \mu_N$, hence we have the corollary as follows.
\begin{cor}
	For $U_q(sl_N)$ exclusive symmetric case with $\delta = \mu_1-\mu_N$, the system is written as follows:
	\begin{equation}
	\begin{cases}
	d = \widetilde{d}\\
	\delta_X = \widetilde{\delta}_X\\
	\delta_Y = \widetilde{\delta}_Y\\
	\delta_Z = \widetilde{\delta}_Z\\
	\end{cases}
	\end{equation}
\end{cor}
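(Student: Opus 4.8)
The plan is to run the argument of Proposition~\ref{P1} essentially verbatim, replacing the two-row parametrization \eqref{xyz} by the exclusive one $X_\alpha=[\mu_1-\alpha,(\mu_2)^{N-2},\mu_N+\alpha]$ announced just before the corollary. As in the inclusive case, the normalization $\det\hat{\mathcal{R}}=1$ reduces the matching of the two $\hat{\mathcal{R}}$-matrices on the fusion $R_1\otimes R_2$ to the single requirement that $\varkappa(X_\alpha)-\varkappa(\widetilde X_\alpha)$ be independent of $\alpha$, with $\varkappa$ as in \eqref{ev_def}. So the whole statement comes down to computing the $\alpha$-dependence of $\varkappa$ for the new diagram shape and reading off which data must agree.

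First I would substitute the exclusive parametrization into $\varkappa(\lambda)=\sum_i\tfrac{\lambda_i}{2}(\lambda_i-2i+1)$. The crucial simplification is that the $N-2$ middle rows all equal $\mu_2$ and contribute only an $\alpha$-independent constant, so the entire dependence lives in row $1$ (at index $i=1$) and the last row (at index $i=N$). Expanding
\[
(\mu_1-\alpha)(\mu_1-\alpha-1)+(\mu_N+\alpha)(\mu_N+\alpha-2N+1)
\]
and collecting powers of $\alpha$ gives
\[
2\varkappa(X_\alpha)=2\alpha^2-2(\delta+N-1)\,\alpha+\text{const},\qquad \delta:=\mu_1-\mu_N .
\]
This is the exclusive analogue of the inclusive formula $2\varkappa(X_\alpha)=2\alpha^2-2(1+\delta)\alpha+\text{const}$: the only change is that the linear coefficient acquires the $N$-dependent shift $N-1$, which correctly reproduces the inclusive case at $N=2$.

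Next I would form the difference against the second matrix. Since both Racah matrices live in the same $U_q(sl_N)$, the value of $N$ is common, so the shift cancels and
\[
\varkappa(X_\alpha)-\varkappa(\widetilde X_\alpha)=(\widetilde\delta-\delta)\,\alpha+\text{const}.
\]
Exactly as in Proposition~\ref{P1}, when $d>0$ this is constant in $\alpha$ if and only if $\delta=\widetilde\delta$, while equality of the number of eigenvalues forces $d=\widetilde d$; the conversely direction is immediate. Applying the same computation to the $Y$ and $Z$ fusions gives $\delta_Y=\widetilde\delta_Y$ and $\delta_Z=\widetilde\delta_Z$, and $d$ is common to all three fusions by representation theory, which yields precisely the system of the corollary (the same shape as \eqref{ev_slN}).

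The parts requiring care are, I expect: (i) justifying that the fusion of a symmetric representation with a conjugate-symmetric one produces exactly the diagrams $[\mu_1-\alpha,(\mu_2)^{N-2},\mu_N+\alpha]$, i.e.\ that only the first and last rows move while the bulk stays rectangular of height $\mu_2$ — this is what collapses the $\varkappa$ computation to two terms; (ii) checking that the sign prefactor $\epsilon_{X_\alpha}$ is again $(-1)^\alpha$, so that, as in Proposition~\ref{P1}, it may be dropped; and (iii) the monotonicity of $\varkappa(X_\alpha)$ over $0\le\alpha\le d$, needed to order the eigenvalues and match $X_\alpha$ with $\widetilde X_\alpha$ — the vertex of the parabola now sits at $\alpha=(\delta+N-1)/2$, so one must verify that the admissible range of $\alpha$ dictated by $\mu_1-\alpha\ge\mu_2\ge\mu_N+\alpha$ stays on a single monotone branch. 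The only genuinely new phenomenon relative to the inclusive proof is the clean cancellation of the $N-1$ term, and that is exactly why the exclusive system has the same form as in the inclusive case.
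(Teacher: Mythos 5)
Your proposal is correct and takes essentially the same route as the paper, whose entire proof is the remark that the analogue of Proposition~\ref{P1} is ``proven in a similar manner'' once one substitutes the exclusive parametrization $X_\alpha=[\mu_1-\alpha,(\mu_2)^{N-2},\mu_N+\alpha]$ and sets $\delta=\mu_1-\mu_N$. Your explicit computation $2\varkappa(X_\alpha)=2\alpha^2-2(\delta+N-1)\alpha+\text{const}$, with the $N$-dependent shift cancelling in the difference $\varkappa(X_\alpha)-\varkappa(\widetilde X_\alpha)$ because both matrices belong to the same $U_q(sl_N)$, together with the check that the admissible range $\alpha\le\delta/2$ stays on one monotone branch of the parabola, is exactly the detail the paper leaves implicit.
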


\section{Proof of the eigenvalue hypothesis in $U_q(sl_2)$}\label{S4}

For $U_q(sl_2)$ all representations can be labeled by one-row Young diagrams, namely we have $R_1 = [r_1]$, $R_2 = [r_2]$, $R_3 = [r_3]$, $R_4 = [r_4]$ with some integers $r_i$. It is convenient to denote $U_q(sl_2)$ Racah matrix by $
			U\left[ \begin{matrix}
				r_1 & r_2\\
				r_3 & r_4
			\end{matrix} \right] $.
	We want to prove Racah matrix symmetries
	\begin{equation}
		U\left[ \begin{matrix}
			r_1 & r_2\\
			r_3 & r_4
		\end{matrix} \right] = U\left[ \begin{matrix}
			\widetilde{r}_1 & \widetilde{r}_2\\
			\widetilde{r}_3& \widetilde{r}_4
		\end{matrix} \right]
	\end{equation}
	that are implied by the eigenvalue conjecture, i.e. by system (\ref{ev_sl2}). 


We remind that $X_\alpha=[x_\alpha]$, $Y_\alpha=[y_\alpha]$, $Z_\alpha=[z_\alpha]$, $\alpha=0...d,$ are defined in (\ref{tenzexp}), (\ref{conds}) as 3 sequences of representations with the following properties. For each $\alpha$ representations $X_\alpha$, $Y_\alpha$, $Z_\alpha$, on the one hand, arise from tensor product decompositions for $[r_1] \otimes [r_2]$, $[r_2] \otimes [r_3]$ and $[r_1] \otimes [r_3]$ respectively. On the other hand, each decomposition of $[x_\alpha] \otimes [r_3]$, $[y_\alpha] \otimes [r_1]$, $[z_\alpha] \otimes [r_2]$ should contain representation $[r_4]$. Each sequence can be represented as follows:
\begin{equation}
	\begin{split}
		[x_\alpha] &= [x_0-2\alpha],\ \ 0 \le \alpha \le d,\\
		[y_\alpha] &= [y_0 -2\alpha],\ \ 0 \le \alpha \le d,\\
		[z_\alpha] &= [z_0-2\alpha],\ \ 0 \le \alpha \le d
	\end{split}
\end{equation}
 where $x_0,y_0,z_0$ and $d$ depend on $r_1,r_2,r_3,r_4$, $\alpha \in \mathbb{Z}$.

\subsection{Sketch of the proof}
The eigenvalue conjecture is originally formulated in terms of $\mathcal{R}$-matrix eigenvalues, which are given by the quadratic Casimir operator. It reduces to a system of {\it quadratic} equations on parameters of Young diagrams. In the previous section we proved that the eigenvalue conjecture for $U_q(sl_2)$ is equivalently formulated in terms of system \eqref{ev_sl2}, which is {\it linear} on Young diagram parameters.

In this section we prove Conjecture \ref{C4}. Namely, we express parameters $d,x_0,y_0,z_0$ from system \eqref{ev_sl2} in terms of $r_i$. It gives us a system of linear equations on $\widetilde{r}_i$. We solve this system and show that $\widetilde{r}_i$ are related to $r_i$ by the well-known tetrahedral and Regge symmetries of 6j-symbols. In order to simplify the proof let us divide it into 4 steps and sketch them first.

\paragraph{Sketch of the proof.} 

We prove the conjecture in the following steps. 
\begin{enumerate}
	\item Obtain expressions for $d(r_i),x_0(r_i),y_0(r_i),z_0(r_i)$ from the representation theory. Same for  $\widetilde{d}(\widetilde{r}_i)$, etc.

	\item Obtained expressions are given in terms of minima and maxima of sets of linear functions and therefore the dependence on the parameters $r_i$ is piecewise linear. We divide expressions into several cases (\textit{fusion types}), where the dependence is linear.
	
	\item Solve system (\ref{ev_sl2}) for each fusion type separately with respect to $\widetilde{r}_i$.
	\item Identify the solutions $\widetilde{r}_i (r_1,\dots r_4)$ with Racah matrix symmetries. In fact, there are 8 distinct solutions that correspond to 8 symmetries of Racah matrix. \\
\end{enumerate}
\paragraph{Step 1.}
Let us express Racah matrix parameters $d,x_0,y_0,z_0$ in terms of $r_i$. The expressions are obtained from fusion rules coming from the representation theory. On the one hand, $[x_\alpha]$ is obtained from $[r_1]\otimes [r_2]$, hence $\max(r_1-r_2,r_2-r_1) \le x_\alpha \le r_1+r_2$. On the other hand, being multiplied by $[r_3]$ it must give $[r_4]$, hence $\max(r_3-r_4, r_4-r_3) \le x_\alpha \le r_3+r_4$. So, we have
\begin{equation}
	\label{k2}
	\max\left(  \begin{matrix}
		r_1-r_2\\
		r_2-r_1\\
		r_3-r_4\\
		r_4-r_3
	\end{matrix}\right)  \le x_\alpha \le \min\left( \begin{matrix}
		r_1 + r_2 \\
		r_3 + r_4
	\end{matrix}\right), \quad 0\le\alpha \le d.
\end{equation}
Obviously, $x_0=\min\left( 
r_1 + r_2,
r_3 + r_4\right)$ and $d = \min\left(
r_1 + r_2 ,
r_3 + r_4
\right) - 
\max\left( 
|r_1-r_2|,
|r_3-r_4|
\right)$ are piecewise linear functions in $r_i$.



\paragraph{Step 2. }
Now let us divide obtained expressions for $x_0$ and $d$ into several cases, where the dependence in $r_i$ is linear. There are 8 possible cases of the $x$ ranges in total. For a given values  of the representations one of the 8 possible cases takes place. Therefore, we can split all $x$'s into 8 \textit{fusion types} with different minima and maxima expressions, where the type is defined by inequality conditions.


It is clear that 

\begin{equation}\label{d_sl2}
	d = \min\limits_{i}(d_i), \quad \text{where} \quad 2d_i = \left[\begin{matrix}
	r_1+r_2+r_3-r_4,\\
	r_1+r_2-r_3+r_4,\\
	r_1-r_2+r_3+r_4,\\
	-r_1+r_2+r_3+r_4,\\
	2r_4,\\
	2r_3,\\
	2r_2,\\
	2r_1.
	\end{matrix}\right.
\end{equation}
Note that the expression of $x_0$ is fully determined by the expression of $d$ for general choice of $r_i$, that is, $x_0=r_1+r_2$ if $d$ is of types $\{1,2,7,8\}$ and $x_0=r_3+r_4$ otherwise. 

The cases for $y$ and $z$ are similar to the case for $x$:
\begin{equation}\label{k2yz}
		\max\left(  \begin{matrix}
			r_3-r_2\\
			r_2-r_3\\
			r_1-r_4\\
			r_4-r_1
		\end{matrix}\right)  \le y_\alpha \le \min\left( \begin{matrix}
			r_3 + r_2 \\
			r_1 + r_4
		\end{matrix}\right), \qquad \max\left(  \begin{matrix}
			r_1-r_3\\
			r_3-r_1\\
			r_2-r_4\\
			r_4-r_2
		\end{matrix}\right)  \le z_\alpha \le \min\left( \begin{matrix}
			r_1 + r_3 \\
			r_2 + r_4
		\end{matrix}\right)
\end{equation}

As we can see in (\ref{d_sl2}), $d = \min(d_i)$ is invariant under permutations of $r_1,r_2,r_3$. The lengths $d(Y)$, $d(Z)$ of sequences $y$ and $z$ are obtained from two expressions (\ref{k2yz}) that differs from $x$'s case $(\ref{k2})$ only by permutations $r_3\leftrightarrow r_1$ and $r_3\leftrightarrow r_2$ correspondingly. Thus, all three lengths have the same expression $d = d(Y) = d(Z)=\min(d_i)$, as it should be for every Racah matrix according to the representation theory. 

If we consider $y$ and $z$ fusions, there are also 8 possibilities, but they do not add any new fusion types. As we have noted above, $d$ has the same expression $(\ref{d_sl2})$ for $x,y$ and $z$. Additionally, we noticed that the fusion type is enough to determine how inequalities conditions are specified. Hence, the 8 fusion types of $x$ are in one-to-one correspondence with $y$ and $z$ fusion types, and it is consistent to say that the Racah matrix can have one of 8 fusion types. Indeed, once $d$ is fixed for $x$, this also fixes $y$ and $z$ inequalities, so the number of fusion types for Racah matrices is 8. The only difference between $x,y,z$ sequences is in $y_0$ and $z_0$ values. As a result, the complete table of fusions types is expressed in Table \ref{tab_sl2}.
\begin{table}[h]
	\begin{center}
		\begin{tabular}{|c|c|c|c|c|c|}
			\hline
			type & $2d$ & $x_0$ & $y_0$ & $z_0$ & conditions \\
			\hline
			$1$ & $ r_1+r_2+r_3-r_4 $ & $r_1+r_2 $ & $r_2+r_3 $ & $r_1+r_3 $ & $d_1 = \min(d_i)$ \\
			\hline
			$2$ & $ r_1+r_2-r_3+r_4 $ & $r_1+r_2 $ & $r_1+r_4 $ & $r_2+r_4 $ & $d_2 = \min(d_i)$ \\
			\hline
			$3$ & $r_1-r_2+r_3+r_4$ & $r_3+r_4 $ & $r_1+r_4 $ & $r_1+r_3 $ & $d_3 = \min(d_i)$ \\
			\hline
			$4$ & $ -r_1+r_2+r_3+r_4$ & $r_3+r_4 $ & $r_2+r_3 $ & $r_2+r_4$ & $d_4 = \min(d_i)$ \\
			\hline
			$5$ & $2r_4$ & $r_3+r_4 $ & $r_1+r_4 $ & $r_2+r_4 $ & $d_5 = \min(d_i)$ \\
			\hline
			$6$ & $2r_3$ & $r_3+r_4 $ & $r_2+r_3$ & $r_1+r_3 $ & $d_6 = \min(d_i)$ \\
			\hline
			$7$ & $2r_2$ & $r_1+r_2 $ & $r_2+r_3$ & $r_2+r_4 $ & $d_7 = \min(d_i)$ \\
			\hline
			$8$ & $2r_1$ & $r_1+r_2 $ & $r_1+r_4 $ & $r_1+r_3 $ & $d_8 = \min(d_i)$ \\
			\hline
		\end{tabular}
	\end{center}
	\caption{$U_q(sl_2)$ Racah matrix types depending on the value of $d$}\label{tab_sl2}
\end{table}

\paragraph{Steps 3 and 4. }
Now we are ready to solve system \eqref{ev_sl2}. In order to have it in front of our eyes, we write out this system one more time:
\begin{equation}
\label{sys_sl2}
\begin{cases}
d = \widetilde{d},\\
x_0 = \widetilde{x}_0,\\
y_0 = \widetilde{y}_0,\\
z_0 = \widetilde{z}_0.\\
\end{cases}
\end{equation}

First of all, let us consider two particular examples, which perfectly illustrate our idea. For example, the left hand side is of type 1 and the RHS is of type 2:
\begin{equation}
\label{ex1}
\begin{cases}
r_1+r_2+r_3-r_4 = \widetilde{r}_1+\widetilde{r}_2-\widetilde{r}_3+\widetilde{r}_4,\\
r_1+r_2 = \widetilde{r}_1+\widetilde{r}_2,\\
r_2+r_3 = \widetilde{r}_1+\widetilde{r}_4,\\
r_1+r_3 = \widetilde{r}_2+\widetilde{r}_4\\
\end{cases}
\Leftrightarrow \quad
\begin{cases}
\widetilde{r}_1 = r_2,\\
\widetilde{r}_2 = r_1,\\
\widetilde{r}_3 = r_4,\\
\widetilde{r}_4 = r_3.\\
\end{cases}
\end{equation}
It is clear that we get the permutation. Let us consider the second example, when LHS is of type 1 and RHS is of type 5:
\begin{equation}
\label{ex2}
\begin{cases}
r_1+r_2+r_3-r_4 = 2\,\widetilde{r}_4,\\
r_1+r_2 = \widetilde{r}_3+\widetilde{r}_4,\\
r_2+r_3 = \widetilde{r}_1+\widetilde{r}_4,\\
r_1+r_3 = \widetilde{r}_2+\widetilde{r}_4\\
\end{cases}
\Leftrightarrow \quad
\begin{cases}
\widetilde{r}_1 = \frac{-r_1+r_2+r_3+r_4}{2},\\
\widetilde{r}_2 = \frac{r_1-r_2+r_3+r_4}{2},\\
\widetilde{r}_3 = \frac{r_1+r_2-r_3+r_4}{2},\\
\widetilde{r}_4 = \frac{r_1+r_2+r_3-r_4}{2}.\\
\end{cases}
\end{equation}
It is the famous Regge symmetry in this solution. By the way, the previous solution is nothing but a tetrahedral symmetry. In the same way it is easy to check that all 64 solutions of system \eqref{sys_sl2} correspond to Regge and/or tetrahedral symmetries. However let us discuss these symmetries in details and write out all solutions of the system in a compact form.

\begin{defin}
	The equations \eqref{Regge} and their compositions are called Regge symmetries or Regge transformations $(\rho = \frac{r_1 + r_2 + r_3 + r_4}{2}, \rho' = \frac{r_1 + r_3 + r_{12} + r_{23}}{2}, \rho'' = \frac{r_2 + r_4 + r_{12} + r_{23} }{2})$. The first relation also  can be written as the Racah matrices symmetry \cite{KR}.
	\begin{equation}
	\left\lbrace \begin{matrix}\label{Regge}
	r_1 & r_2 & r_{12} \\
	r_3 & r_4 & r_{23}\end{matrix} \right\rbrace = \left\lbrace \begin{matrix}
	\rho - r_3 & \rho - r_4 & r_{12} \\
	\rho - r_1 & \rho - r_2 & r_{23} \\
	\end{matrix} \right\rbrace = \left\lbrace \begin{matrix}
	\rho' - r_3 & r_2 & \rho' - r_{23} \\
	\rho' - r_1 & r_4 & \rho' - r_{12} \\
	\end{matrix} \right\rbrace = \left\lbrace \begin{matrix}
	r_1 & \rho'' - r_4 & \rho'' - r_{23}\\
	r_3 & \rho'' - r_2 & \rho'' - r_{12} \\
	\end{matrix} \right\rbrace
	\end{equation}
\end{defin}

\begin{defin}
	Tetrahedral symmetry is the known property of 6j-symbols \cite{KR,SBVD}. For  $U_q(sl_2)$, it is expressed as argument permutations:
	\begin{align}
	\left\{ \begin{matrix}
	r_1 & r_2 & r_{12} \\
	r_3 & r_4 & r_{23} \end{matrix} \right\}
	= \left\{ \begin{matrix}
	{r_3} & {r_4} & {r_{12}} \\
	{r_1} & {r_2} & {r_{23}} \end{matrix} \right\}  =\left\{ \begin{matrix}
	{r_2} & {r_1} & {r_{12}} \\
	{r_4} & {r_3} & {r_{23}} \end{matrix} \right\} = \left\{ \begin{matrix}
	{r_3} & {r_2} & {r_{23}} \\
	{r_1} & {r_4} & {r_{12}} \end{matrix} \right\} = \left\{ \begin{matrix}
	{r_1} & {r_{12}} & {r_2} \\
	{r_3} & {r_{23}} & {r_4} \end{matrix} \right\}\nonumber
	\end{align}
\end{defin}
However we do not need all these symmetries, but only those that relate the Racah matrices, that is, the symmetries should not affect the matrix indexes $r_{12}$ and $r_{23}$. Therefore, all symmetries interesting for us in matrix terms look like
	\begin{align}
	U\left[ \begin{matrix}
	r_1 & r_2 \\
	r_3 & r_4
	\end{matrix} \right] = U\left[ \begin{matrix}
	r_4 & r_3 \\
	r_2 & r_1
	\end{matrix} \right] &= U\left[ \begin{matrix}
	r_3 & r_4 \\
	r_1 & r_2
	\end{matrix} \right]  = U\left[ \begin{matrix}
	r_2 & r_1 \\
	r_4 & r_3
	\end{matrix} \right]=\\ =
	U\left[ \begin{matrix}\nonumber
	\rho - r_1 & \rho - r_2 \\
	\rho - r_3 & \rho - r_4 \\
	\end{matrix} \right]= U\left[ \begin{matrix}
	\rho - r_4 & \rho - r_3 \\
	\rho - r_2 & \rho - r_1 \\
	\end{matrix} \right]  &= U\left[ \begin{matrix}
	\rho - r_3 & \rho - r_4 \\
	\rho - r_1 & \rho - r_2 \\
	\end{matrix} \right] = U\left[ \begin{matrix}
	\rho - r_2 & \rho - r_1 \\
	\rho - r_4 & \rho - r_3 \\
	\end{matrix} \right].\end{align}

It is easy to see that these symmetries form a group, which we denote by $\Omega$.
\begin{defin} 	The group $\Omega$ of Racah matrix symmetries is defined as follows:
	\begin{equation}
	\begin{split}
	&\Omega \cong \mathbb{Z}_2 \times \mathbb{Z}_2 \times \mathbb{Z}_2, \qquad \Omega = \left<\operatorname{Id}, \omega_1, \omega_2, \omega_3\right>,\\
	&\begin{cases}
	\omega_1: (r_1,r_2,r_3,r_4) \rightarrow (r_3,r_4,r_1,r_2),\\
	\omega_2: (r_1,r_2,r_3,r_4) \rightarrow (r_2,r_1,r_4,r_3),\\
	\omega_3: (r_1,r_2,r_3,r_4) \rightarrow (\rho -r_1,\rho -r_2,\rho -r_3,\rho -r_4), \quad \rho = \frac{r_1+r_2+r_3+r_4}{2}
	\end{cases}\\
	&\omega_1\omega_2 = \omega_2 \omega_1,  \quad \omega_1^2 =\operatorname{Id},    \\
	&\omega_2\omega_3 = \omega_3 \omega_2, \quad  \omega_2^2 =\operatorname{Id},    \\
	&\omega_3\omega_1 = \omega_1 \omega_3,  \quad \omega_3^2 =\operatorname{Id}
	\end{split}
	\end{equation}
\end{defin}
Now one can see that our first example \eqref{ex1} corresponds to $\omega_2$ and our second example \eqref{ex2} corresponds to $\omega_3$. Finally, let us write out all solutions of system \eqref{sys_sl2} in terms of symmetry group $\Omega$:
\begin{table}[h]
\begin{center}
\begin{tabular}{|c|c|c|c|c|c|c|c|c|}
\hline
Type $\setminus$ Symmetry  & $\operatorname{Id}$ & $\omega_1$ & $\omega_2$ & $\omega_3$ &  $\omega_1\omega_2$ &  $\omega_1\omega_3$ &  $\omega_2\omega_3$ & $\omega_1\omega_2\omega_3$ \\
\hline
1&1 & 3&2&5&4&7&6&8 \\
\hline
2&2&4&1&6&3&8&5&7 \\
\hline
3&3&1&4&7&2&5&8&6 \\
\hline
4&4&2&3&8&1&6&7&5 \\
\hline
5&5&7&6&1&8&3&2&4 \\
\hline
6&6&8&5&2&7&4&1&3 \\
\hline
7&7&5&8&3&6&1&4&2 \\
\hline
8&8&6&7&4&5&2&3&1 \\
\hline
\end{tabular}
\end{center}
\caption{Solutions of system \eqref{sys_sl2}}\label{sol_sl2}
\end{table}

\begin{rem}
It is straightforward to check that any degeneracy like $d_i=d_j=\min\limits_{k}(d_k)$ for some $i$ and $j$ does not lead to ambiguity of the choice of $x_0,y_0,z_0$ from Table \ref{tab_sl2}. 
\end{rem}

\section{Symmetries in the $U_q(sl_N)$ symmetric case}\label{S5}
For $U_q(sl_N)$ let us consider a couple of Racah matrices for symmetric representations $R_1=[r_1]$, $R_2=[r_2]$, $R_3=[r_3]$, $R_4=[m_1,m_2,m_3]$, $\widetilde{R}_1=[\widetilde{r}_1]$, $\widetilde{R}_2=[\widetilde{r}_2]$, $\widetilde{R}_3=[\widetilde{r}_3]$, $\widetilde{R}_4=[\widetilde{m}_1, \widetilde{m}_2, \widetilde{m}_3]$, where $r_1,r_2,r_3, m_1,m_2,m_3$ are integers that denote the length of a row in a Young diagram. Also sometimes we will mention $[m_1,m_2,m_3]$ as $r_4$ and $X,Y$ and $Z$ as they are defined in (\ref{tenzexp},\ref{conds}).
\begin{equation}
U\left[ \begin{matrix}
[r_1] & [r_2]\\
[r_3] & [m_1,m_2,m_3]
\end{matrix} \right] = U\left[ \begin{matrix}
[\widetilde{r}_1] & [\widetilde{r}_3] \\
[\widetilde{r}_3] & [\widetilde{m}_1,\widetilde{m}_2,\widetilde{m}_3]
\end{matrix} \right]
\end{equation}
Our aim is to find such $\widetilde{r}_1,\widetilde{r}_2,\widetilde{r}_3,\widetilde{r}_4$ that equality is true for all possible $r_1,r_2,r_3,r_4$ from the eigenvalue hypothesis.

As we have derived in (\ref{ev_slN}), we need to find expressions for $d$ and $\delta$. Let us find the range of $X,Y,Z$ values via fusion rule.
\begin{lemma}
	\label{lemma1}
	Given 3 arbitrary symmetric $U_q(sl_N)$ representations $R_1,R_2,R_3$, representation $R_4 = [m_1,m_2,m_3]\subset R_1\otimes R_2 \otimes R_3$ and $X,Y,Z$ defined as above, then $d,\delta$ fusion types may be expressed as in the Table \ref {tab_slN}.
	
	\begin{table}[h!]\begin{center}
			\begin{tabular}{|c|c|c|c|c|c|}
				\hline
				type & $d$ & $\delta_X$ & $\delta_Y$ & $\delta_Z$ & conditions  \\
				\hline
				$1$ & $ m_1-m_2$ & $a + r_3 $ & $a + r_1 $ & $a + r_2 $ & $d_1 = \min(d_i)$  \\
				\hline
				$2$ & $ r_1-m_3$ & $A - r_3 $ & $a + r_1 $ & $A - r_2 $ & $d_2 = \min(d_i)$  \\
				\hline
				$3$ & $ r_2-m_3$ & $A - r_3 $ & $A - r_1 $ & $a + r_2 $ & $d_3 = \min(d_i)$  \\
				\hline
				$4$ & $ r_3-m_3$ & $a + r_3 $ & $A - r_1 $ & $A - r_2 $ & $d_4 = \min(d_i)$  \\
				\hline
				$5$ & $ m_2-m_3$ & $A - r_3 $ & $A - r_1 $ & $A - r_2 $ & $d_5 = \min(d_i)$  \\
				\hline
				$6$ & $ m_1-r_1$ & $a + r_3 $ & $A - r_1 $ & $a + r_2 $ & $d_6 = \min(d_i)$  \\
				\hline
				$7$ & $ m_1-r_2$ & $a + r_3 $ & $a + r_1 $ & $A - r_2 $ & $d_7 = \min(d_i)$  \\
				\hline
				$8$ & $ m_1-r_3$ & $A - r_3 $ & $a + r_1 $ & $a + r_2 $ & $d_8 = \min(d_i)$  \\
				\hline
			\end{tabular}
		\end{center}
		\caption{$U_q(sl_N)$ $U$-matrix types depending on the value of $d$. Here $a:=2m_1-r_1-r_2-r_3$, $A:=r_1+r_2+r_3-2m_3$ and $r_1+r_2+r_3=m_1+m_2+m_3$ from fusion rules.}\label{tab_slN}\end{table}
\end{lemma}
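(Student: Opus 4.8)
The plan is to reduce the statement to the combinatorics of horizontal strips. Since $X_\alpha=[\mu_X-\alpha,\mu_X-\delta_X+\alpha]$ is a two-row diagram occurring in $[r_1]\otimes[r_2]=\bigoplus_k[r_1+r_2-k,k]$, I would parametrize the whole sequence by the second-row length $k=x_2$, so that the first row is $x_1=r_1+r_2-k$ and the sum $x_1+x_2=r_1+r_2$ is fixed. The condition $R_4\subset X_\alpha\otimes R_3$ is, by Pieri's rule for multiplying the two-row diagram $[x_1,x_2]$ by the single row $[r_3]$ (in the stable range of $N$, where no column of height $N$ must be removed), precisely the interlacing $m_1\ge x_1\ge m_2\ge x_2\ge m_3$ together with $x_1+x_2+r_3=m_1+m_2+m_3$; the latter reproduces the consistency relation $r_1+r_2+r_3=m_1+m_2+m_3$ stated in the caption.

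Next I would collect all inequalities on $k$. Rewriting the interlacing and the fusion constraints in terms of $k$, the lower bounds read $k\ge\max(r_1+r_2-m_1,\,m_3)$, where $k\ge 0$ is absorbed into $k\ge m_3\ge 0$, while the upper bounds read $k\le\min(r_1+r_2-m_2,\,m_2,\,r_1,\,r_2)$. Hence $X_0$ attains $k_{\min}=\max(r_1+r_2-m_1,m_3)$ and $X_d$ attains $k_{\max}=\min(\dots)$, giving
\[
\delta_X=r_1+r_2-2k_{\min},\qquad d=k_{\max}-k_{\min}=\min_{i,j}(u_i-l_j),
\]
where $u_i,l_j$ run over the four upper and two lower bounds. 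Expanding this minimum and simplifying each difference with the consistency relation yields exactly the eight values $d_1,\dots,d_8$ of Table~\ref{tab_slN}. The identical computation for $Y$ (the pair $(r_2,r_3)$ with external leg $r_1$) and $Z$ (the pair $(r_1,r_3)$ with external leg $r_2$) follows from the relabelings $r_1\mapsto r_2\mapsto r_3\mapsto r_1$ and $r_2\leftrightarrow r_3$, and produces the same set $\{d_i\}$; this reproduces the representation-theoretic fact that all three fusions have the common length $d$.

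For the $\delta$ columns I would observe that $\delta_X$ can take only the two values $a+r_3=2m_1-r_1-r_2$ (when $k_{\min}=r_1+r_2-m_1$) and $A-r_3=r_1+r_2-2m_3$ (when $k_{\min}=m_3$), and similarly $\delta_Y\in\{a+r_1,\,A-r_1\}$, $\delta_Z\in\{a+r_2,\,A-r_2\}$. The crux of the argument, and the main bookkeeping obstacle, is to verify that the inequality defining each type selects the correct bound: for example in type $1$ the minimality condition $d_1\le d_8$ gives $m_2\ge r_3$, equivalently $r_1+r_2-m_1\ge m_3$, which forces $k_{\min}=r_1+r_2-m_1$ and hence $\delta_X=a+r_3$, exactly as tabulated; the entries $\delta_Y,\delta_Z$ come out of the analogous inequalities after the cyclic relabeling. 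I would carry this check through all eight rows and all three fusions, the point being that being the global minimum imposes just enough of the comparisons $d_i\le d_j$ to pin down every binding constraint.

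Finally I would treat ties, where two values $d_i=d_j$ simultaneously attain the minimum. As in the $U_q(sl_2)$ remark, I expect such degeneracies to cause no ambiguity: a coincidence $d_i=d_j=\min_k d_k$ forces an equality among the entries of $R_4$ and the $r_i$ (for instance $r_2=m_2$), and under that equality the two candidate values of the ambiguous $\delta$ collapse to one number, since $A-a=2m_2$. Checking this collapse for each coinciding pair shows that the table is well defined regardless of the chosen minimizing index, which completes the proof.
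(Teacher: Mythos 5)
Your proposal is correct and follows essentially the same route as the paper's proof: parametrize $X_k=[r_1+r_2-k,k]$, extract the admissible range of $k$ from the Pieri/Littlewood--Richardson rule (your interlacing conditions $m_1\ge x_1\ge m_2\ge x_2\ge m_3$ are exactly the paper's inequalities (\ref{kN})), read off $d$ as the range length and $\delta_X=r_1+r_2-2k_{\min}$, and transfer to $Y,Z$ by relabeling the $r_i$. You are in fact more explicit than the paper on two points it glosses over --- checking that the minimality conditions $d_i=\min_j(d_j)$ pin down the binding lower bound (hence the correct $\delta$ entry in each row) and that ties $d_i=d_j=\min_k(d_k)$ cause no ambiguity --- so the proposal is, if anything, a slightly more careful write-up of the same argument.
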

\begin{proof}
	Let us look at the Young diagrams in decompositions $X_k \subset R_1\otimes R_2$ and $R_4 \subset X_k \otimes R_3$. Littlewood-Richardson rules for the $sl_N$ tensor product decomposition say that in a product of arbitrary Young diagram $\mu$ and symmetric diagram $\nu$ every irreducible component $\rho \subset \mu\otimes \nu$ may be obtained by adding elements of $\nu$ to $\mu$ in such a way that no two boxes of $\nu$ occur in the same column \cite{harris}. We can easily represent the tensor product of symmetric representations as a sum of representations corresponding to two-row Young diagrams $X_k = [r_1+r_2-k, k]$, where $0 \le k \le \min(r_1, r_2)$. On the one hand, $R_4 = [m_1,m_2,m_3]$, but on the other hand we can obtain it from the product of $X_k=[r_1+r_2-k,k]$ and $R_3$. It can be written in the general form as $[r_1+r_2+r_3-k-l-m_3,k+l,m_3]$, where $l$ and $m_3$ are non-negative integers that parametrize tensor product of $X_k\otimes R_3$ and correspond to the number of boxes added to the second and the third rows of $X_k$, whereas $r_3-l-m_3$ boxes are added to the first row of $X_k$.
	
	From the equality of two $R_4$ expressions we immediately find that $l=m_2-k$, $m_3 = r_1+r_2+r_3-m_1-m_2$. Since $l$ is determined by the value of $k$, so $k$ is the only integer parameter in the tensor product, but we will leave this parameter to find the conditions from fusion rules. The $l$ definition requires that
	$l \ge 0$ and $l+m_3 \le r_3$.
	Also, the rules give us additional inequalities $l + k \le r_1+r_2-k$ and $ m_3 \le k$.
	After reducing the inequalities we obtain:
	\begin{equation}\label{kN}
	\max\begin{pmatrix}
	m_3\\ r_1+r_2-m_1
	\end{pmatrix} \le k \le \min\begin{pmatrix}
	r_1+r_2-m_2\\m_2\\r_1\\r_2
	\end{pmatrix}
	\end{equation}
	$d$ can be obtained as the range length, $\delta$ corresponds to the left side of inequalities: $\delta_X = r_1+r_2-2k_{min}$. It's easy to find the expressions for $Y,Z$ due to the symmetry of the problems. We only need to change the variables: $r_3\leftrightarrow r_1$ and $r_3\leftrightarrow r_2$ for $k_{Y}$ and $k_{Z}$ correspondingly. Similar to $U_q(sl_2)$, all fusion types can be expressed as follows:
	\begin{equation}\label{d_slN}
	d_i = \left[\begin{matrix}
	m_1-m_2\\
	r_1-m_3\\
	r_2-m_3\\
	r_3-m_3\\
	m_2-m_3\\
	m_1-r_1\\
	m_1-r_2\\
	m_1-r_3
	\end{matrix}\right.
	\end{equation}
	where $d = \min(d_i)$. Note that $d_X = d_Y = d_Z$ and $\delta_X,\delta_Y,\delta_Z$ expressions may be obtained for every $d_i$, so it is possible to rewrite the inequalities for $X,Y,Z$ as the table of fusion types.
	
\end{proof}

\bigskip

\begin{statement}\label{theor}
	The system (\ref{ev_slN}) for $N>3$ has 8 different solutions that form a group of 3 independent symmetries. Each symmetry is defined for all integer $C\ge C_0$ where $C_0$ is the least number that keeps the number of elements in rows of Young diagrams non-negative.
	{ \begin{empheq}[box=\fbox]{align}\label{sol_slN}
			U\left[ \begin{matrix}
				[r_1] & [r_2]\\
				[r_3] & [m_1,m_2,m_3]\\
			\end{matrix} \right]
			&= U\left[ \begin{matrix}
				[r_1+C] & [r_2+C]\\
				[r_3+C] & \left[m_1+C, m_2+C,m_3+C\right]\\
			\end{matrix} \right] =\\
			&=U\left[ \begin{matrix}\nonumber
				[r_2+C] & [r_1+C]\\
				[m_1{-}m_2{+}m_3{+}C] & \left[ m_1{+}C,r_1{+}r_2{-}m_2{+}C,m_3{+}C\right]\\
			\end{matrix} \right] =\\
			&=U\left[ \begin{matrix}\nonumber
				[r_3+C] & [m_1-m_2+m_3+C]\\
				[r_1 +C] & \left[ m_1+C,r_1+r_3-m_2+C, m_3+C \right]\\
			\end{matrix} \right] =\\
			&=U\left[ \begin{matrix}\nonumber
				[m_1{-}m_2{+}m_3{+}C] & [r_3+C]\\
				[r_2+C] & \left[ m_1{+}C, r_2{+}r_3{-}m_2{+}C, m_3{+}C \right]\\
			\end{matrix} \right] =\\
			&=U\left[ \begin{matrix}\nonumber
				[C-r_1] & [C-r_2]\\
				[C-r_3] & \left[ C-m_3,C-m_2,C-m_1\right]
			\end{matrix}\right]=\\
			&= U\left[ \begin{matrix}\nonumber
				[C-r_2] & [C-r_1]\\
				[C{-}m_1{+}m_2{-}m_3] & \left[C{-}m_3,C{+}m_2{-}r_1{-}r_2,C{-}m_1\right]\\
			\end{matrix}\right]=\\
			&= U\left[ \begin{matrix}\nonumber
				[C-r_3] & [C-m_1+m_2-m_3]\\
				[C-r_1] & \left[C-m_3, C+m_2-r_1-r_3, C-m_1\right]\\
			\end{matrix}\right]=\\
			&= U\left[ \begin{matrix}\nonumber
				[C{-}m_1{+}m_2{-}m_3] & [C-r_3]\\
				[C-r_2] & \left[C-m_3, C{+}m_2{-}r_2{-}r_3, C{-}m_1 \right]\\
			\end{matrix}\right]
	\end{empheq}}
\end{statement}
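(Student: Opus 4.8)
The plan is to follow the strategy of Section \ref{S4} for $U_q(sl_2)$ step by step, replacing the data of Table \ref{tab_sl2} with that of Lemma \ref{lemma1} and Table \ref{tab_slN}. First I would regard $d, \delta_X, \delta_Y, \delta_Z$ as explicit piecewise-linear functions of the six integers $(r_1, r_2, r_3, m_1, m_2, m_3)$ subject to the fusion relation $r_1+r_2+r_3 = m_1+m_2+m_3$, and the same for the tilded variables. Exactly as in the $sl_2$ case, the parameter domain splits into the eight \emph{fusion types} of Table \ref{tab_slN} according to which $d_i$ in (\ref{d_slN}) attains the minimum; on each type all four quantities become honestly linear. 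System (\ref{ev_slN}) then reduces, for every ordered pair (type of $U$, type of $\widetilde{U}$), to four linear equations in the six unknowns $\widetilde{r}_1, \widetilde{r}_2, \widetilde{r}_3, \widetilde{m}_1, \widetilde{m}_2, \widetilde{m}_3$.

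Next I would solve these systems. The decisive point is a dimension count: for $N>3$ there are six unknowns, constrained by the four equations of (\ref{ev_slN}) together with the single fusion relation $\sum_i \widetilde{r}_i = \sum_j \widetilde{m}_j$, which leaves exactly one free integer degree of freedom. This is precisely the origin of the parameter $C$, which was absent for $sl_2$ (four unknowns meeting four equations gave a rigid solution). Solving each consistent type-pair thus produces a one-parameter family, and I expect all the $8\times 8$ pairs to collapse onto exactly the eight families displayed in (\ref{sol_slN}), the fusion type of $\widetilde{U}$ being determined by that of $U$ through a table analogous to Table \ref{sol_sl2}.

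I would then organize the eight solutions into a group. By analogy with $\Omega \cong \mathbb{Z}_2 \times \mathbb{Z}_2 \times \mathbb{Z}_2$ from the $sl_2$ analysis, I would exhibit three commuting generators: two argument-permutation involutions (the $sl_N$ analogues of $\omega_1, \omega_2$, which permute the symmetric representations $R_1, R_2, R_3$ and correspondingly reshuffle the rows of $R_4$) and one Regge-type reflection (the analogue of $\omega_3$), realized here by the conjugation $[m_1, m_2, m_3] \mapsto [C-m_3, C-m_2, C-m_1]$ together with $r_i \mapsto C - r_i$. Verifying that each generator squares to a pure $C$-shift — the realization of the identity element, i.e. the first matrix of (\ref{sol_slN}) — and that the generators commute pairwise establishes the claimed $\mathbb{Z}_2^3$ structure and, for generic $r_i, m_j$, the distinctness of the eight solutions.

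The step I expect to be the main obstacle is not the linear algebra but the consistency bookkeeping. For each type-pair one must check that the computed $(\widetilde{r}, \widetilde{m})$ genuinely lies in the asserted fusion type, that is, satisfies the defining inequalities (\ref{kN}) underlying Table \ref{tab_slN}, so that the linear formulas used to derive it remain valid; only then does the solution correspond to an actual symmetry. A related subtlety is identifying the threshold $C_0$: since $C$ enters with opposite signs across the two halves of (\ref{sol_slN}) (as $+C$ in the shift family versus $C - (\cdot)$ in the reflection family), one must determine family by family the least $C$ for which every row length of every Young diagram remains non-negative, and confirm that the Racah matrices are well-defined throughout that range.
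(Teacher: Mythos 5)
Your setup coincides with the paper's own proof in its framework: the linearization over the eight fusion types of Table \ref{tab_slN}, the dimension count (six unknowns against the four equations of (\ref{ev_slN}) together with the fusion relation $\sum_i \widetilde{r}_i=\sum_j \widetilde{m}_j$) producing exactly one free parameter $C$, the translation invariance explaining why $C$ enters additively, and the organization of the answers into three commuting involutions are all present in the paper and are correct. The genuine gap sits exactly at the step you yourself flag as ``the main obstacle'': you propose to solve all $8\times 8$ type-pairs and then verify, pair by pair, that each computed solution satisfies the inequalities defining its asserted fusion type, but you give no mechanism for doing this, and your statement that you ``expect all the $8\times 8$ pairs to collapse onto exactly the eight families'' is precisely the assertion that needs proof — it cannot be left as an expectation.

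The paper closes this gap with a structural observation missing from your proposal. Writing $c_{ij}$ for the system with fusion type $i$ on the left-hand side and type $j$ on the right, it suffices to solve only the four systems $c_{11}$, $c_{12}$, $c_{13}$, $c_{15}$, because each of the resulting changes of variables acts on the rows of Table \ref{tab_slN} as a \emph{permutation}, transporting $d,\delta_X,\delta_Y,\delta_Z$ together with their inequality conditions simultaneously; for instance $\hat{c}_{15}$ sends the list $(d_1,\dots,d_8)$ to $(\widetilde{d}_5,\widetilde{d}_6,\widetilde{d}_7,\widetilde{d}_8,\widetilde{d}_1,\widetilde{d}_2,\widetilde{d}_3,\widetilde{d}_4)$, so the condition $d_1=\min_i d_i$ is carried to $\widetilde{d}_5=\min_i \widetilde{d}_i$. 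Hence each of these four transformations is valid on every fusion type at once — the consistency bookkeeping reduces to this single permutation property — and an arbitrary $c_{ij}$ is obtained by composition (first $\hat{c}_{15}$ maps types $5$--$8$ to $1$--$4$, then $\hat{c}_{13}$ maps $3,4$ to $1,2$, then $\hat{c}_{12}$ maps $2$ to $1$), which simultaneously yields the group structure and shows the $64$ cases reduce to the $8$ displayed families. Without this permutation-plus-composition argument, or an actually completed case-by-case verification of all $64$ systems and their inequalities, your proposal remains an outline of the computation rather than a proof.
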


\begin{rem}
	Also, one can obtain the $U_q(sl_3)$ case from the solution above by removing columns of a height three like $[m_1,m_2,m_3]\rightarrow [m_1-m_3,m_2-m_3]$.
\end{rem}
As an example, we write several equal Racah matrices for both $ sl (3) $ and $U_q(sl_N)$, $N>3$:
\begin{itemize}
	\item $U_q(sl_3)$, $C_1\ge -2$, $C_2 \ge 11$. Note that the inequalities on $C$ are found from the $U_q(sl_N)$ expressions.
	\begin{align}U\left[ \begin{matrix}
	[6] & [5]\\
	[7] & [9,3]\\
	\end{matrix} \right] &= \\
	= U\left[ \begin{matrix}\nonumber
	[6+C_1] & [5+C_1]\\
	[7+C_1] & \left[9, 3
	\right]\\
	\end{matrix} \right]
	=U\left[ \begin{matrix}
	[5+C_1] & [6+C_1]\\
	[8 +C_1] & \left[ 9, 4 \right]
	\end{matrix} \right]
	&= U\left[ \begin{matrix}
	[7+C_1] & [5+C_1]\\
	[6+C_1] & \left[ 9,6 \right]
	\end{matrix}\right]
	= U\left[ \begin{matrix}
	[8+C_1] & [7+C_1]\\
	[5 +C_1] & \left[ 9,5\right]
	\end{matrix}\right]=\\
	=U\left[ \begin{matrix}\nonumber
	[C_2-6] & [C_2-5]\\
	[C_2-7] & \left[ 9,6\right]\\
	\end{matrix} \right]
	=U\left[ \begin{matrix}
	[C_2-5] & [C_2-6]\\
	[C_2-8] & \left[ 9, 5 \right]
	\end{matrix} \right]
	&= U\left[ \begin{matrix}
	[C_2-7] & [C_2-5]\\
	[C_2-6] & \left[9,3 \right]\\
	\end{matrix}\right]
	= U\left[ \begin{matrix}
	[C_2-8] & [C_2-7]\\
	[C_2-5] & \left[ 9,4\right]
	\end{matrix}\right]\end{align}
	\item $U_q(sl_N)$, $N>3$, $C_1\ge -1$, $C_2 \ge 16$.
	\begin{align}U\left[ \begin{matrix}
	[7] & [8]\\
	[11] & [16,9,1]\\
	\end{matrix} \right] &= \\
	= U\left[ \begin{matrix}\nonumber
	[7+C_1] & [8+C_1]\\
	[11+C_1] & \left[16+C_1, 9+C_1,1+C_1\right]\\
	\end{matrix} \right] &=U\left[ \begin{matrix}
	[C_2-7] & [C_2-8]\\
	[C_2-11] & \left[ C_2-1,C_2-9,C_2-16\right]
	\end{matrix}\right]=\\
	=U\left[ \begin{matrix}\nonumber
	[8+C_1] & [7+C_1]\\
	[8+C_1] & \left[ 16+C_1,6+C_1,1+C_1\right]\\
	\end{matrix} \right] &= U\left[ \begin{matrix}
	[C_2-8] & [C_2-7]\\
	[C_2-24] & \left[C_2-1,-6+C_1,C_2-16\right]\\
	\end{matrix}\right]=\\
	=U\left[ \begin{matrix}\nonumber
	[11+C_1] & [8+C_1]\\
	[7 +C_1] & \left[ 16+C_1,9+C_1, 1+C_1 \right]\\
	\end{matrix} \right] &= U\left[ \begin{matrix}
	[C_2-11] & [C_2-8]\\
	[C_2-7] & \left[C_2-1, C_2-9, C_2-16\right]\\
	\end{matrix}\right]=\\
	=U\left[ \begin{matrix}\nonumber
	[8+C_1] & [11+C_1]\\
	[8+C_1] & \left[ 16+C_1, 10+C_1, 1+C_1 \right]\\
	\end{matrix} \right] &= U\left[ \begin{matrix}
	[C_2-8] & [C_2-11]\\
	[C_2-8] & \left[C_2-1, C_2-10, C_2-16 \right]\\
	\end{matrix}\right]
	\end{align}
\end{itemize}
\subsection{Derivation of the symmetries ($\ref{sol_slN}$)}\label{section_sol}
	
	We have to consider all possible values of $r_1,r_2,r_3,r_4$, so there are $8$ different Racah types with different $d$. From Lemma \ref{lemma1} we know the expression for $d,\delta_X,\delta_Y,\delta_Z$, whereas $\delta$'s expressions are determined by the choice of particular $d$, so we have $8$ types for $d$ and $8$ types for $\widetilde{d}$, $64$ cases in total.
	
	We shall denote each system with the types defined in Table \ref{tab_slN} in the following way. A system $c_{ij}$ has type $i$ on the left side and type $j$ on the right side, $1\le i,j \le 8$. For example, a system with $d,\delta_X,\delta_Y,\delta_Z$ from fusion type 2 and $\widetilde{d}, \widetilde{\delta}_X, \widetilde{\delta}_Y, \widetilde{\delta}_Z$ from type 5 is denoted as $c_{25}$.
	In fact, we can just start to solve all 64 systems of equations but there is a more convenient way to solve this system.
	\begin{statement}
		Solutions $c_{ij}$  satisfy the following properties
		 \begin{enumerate}
		 	\item The solution always exists (neglecting inequality conditions) and has one free parameter $C$;
		 	\item If $\widetilde{r}_i$, $\widetilde{m}_i$ is a solution, then $\widetilde{r}_i + C$, $\widetilde{m}_i + C $ also solves the system;
		 	\item Each system from $\{c_{ii},c_{12},c_{13},c_{15}\}$ is inducing the symmetry that can be applied to every type;
		 	\item Symmetry induced by $c_{ij}$ can be expressed as a composition of symmetries induced by $c_{11},c_{12},c_{13}$ and $c_{15}$.
		 \end{enumerate}
	\end{statement}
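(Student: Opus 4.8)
The plan is to prove the four claims in order, exploiting linearity and a $\mathbb{Z}_2\times\mathbb{Z}_2\times\mathbb{Z}_2$ action on the eight fusion types so as to avoid solving all $64$ systems $c_{ij}$ directly. I would begin with claim 2, the quickest, since it drives claim 1. Every entry of the invariant tuple $(d,\delta_X,\delta_Y,\delta_Z)$ in Table \ref{tab_slN} is a difference of row lengths: $d_i$ in (\ref{d_slN}) is a difference of two of the $r$'s and $m$'s, while each $\delta$ has the form $a+r_\bullet$ or $A-r_\bullet$. Under the uniform shift $r_i\mapsto r_i+C$, $m_i\mapsto m_i+C$ one has $a\mapsto a-C$ and $A\mapsto A+C$, so every such combination is unchanged; hence the whole tuple is shift-invariant and translating a solution by $C$ yields a solution, which is claim 2. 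For claim 1 I would then count dimensions: the fusion rule $\widetilde r_1+\widetilde r_2+\widetilde r_3=\widetilde m_1+\widetilde m_2+\widetilde m_3$ cuts the $(\widetilde r,\widetilde m)$-space down to dimension $5$, on which the four equations (\ref{ev_slN}) for a fixed right-hand type $j$ are linear. The plan is to verify, type by type, that the map $\Phi_j\colon(\widetilde r,\widetilde m)\mapsto(\widetilde d,\widetilde\delta_X,\widetilde\delta_Y,\widetilde\delta_Z)$ has rank $4$; surjectivity then makes every left-hand tuple attainable (so a solution of $c_{ij}$ always exists), while the kernel has dimension $5-4=1$ and is spanned by the shift of claim 2, giving exactly one free parameter $C$.

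The substance of the argument is claims 3 and 4. First I would solve the four row-$1$ systems $c_{11},c_{12},c_{13},c_{15}$ explicitly. By analogy with the $sl_2$ computation I expect $c_{11}$, and indeed every diagonal $c_{ii}$, to return the bare shift, and $c_{12},c_{13},c_{15}$ to return three commuting involutions: two that permute the symmetric representations $r_1,r_2,r_3$ with the compensating change of $R_4$ displayed in (\ref{sol_slN}), and the conjugation $r_\bullet\mapsto C-r_\bullet$, $R_4\mapsto[C-m_3,C-m_2,C-m_1]$. To establish claim 3 I would then check directly that each of these maps preserves the genuine invariants $d=\min_i d_i$ and $(\delta_X,\delta_Y,\delta_Z)$ for an arbitrary input, not only for a type-$1$ input: each generator permutes the symmetric representations and reflects the $m$'s in a way that permutes the eight quantities $d_i$ of (\ref{d_slN}) among themselves and permutes the channels $X,Y,Z$ accordingly, so the $\min$ and the $\delta$-assignments are respected no matter which $d_i$ attains the minimum. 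Equivalently, each generator descends to a well-defined permutation of the eight fusion types, exactly as recorded in Table \ref{sol_sl2} for $sl_2$.

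Finally, for claim 4 I would tabulate how the three generators act on the eight types, the $sl_N$ analogue of Table \ref{sol_sl2}, and I expect this action to be simply transitive, so that the generators generate $\mathbb{Z}_2\times\mathbb{Z}_2\times\mathbb{Z}_2$ acting regularly on the types. Then for any pair $(i,j)$ there is a unique group element, hence a unique word in $c_{12},c_{13},c_{15}$, carrying type $i$ to type $j$; composing the corresponding universal symmetries (which is legitimate precisely because of claim 3) yields a solution of $c_{ij}$, and by the uniqueness up to shift from claim 1 this composite must be the solution. I expect the main obstacle to be exactly the compatibility check in claim 3: confirming that each generator acts consistently across all eight types, with no case in which a map derived from type $1$ fails to be a symmetry of some other type, since it is this piecewise-linear $\min$-structure that could in principle break. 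Once that compatibility is secured, claims 1 and 4 reduce to routine linear algebra and the regular group action.
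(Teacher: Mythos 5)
Your proposal is correct and follows essentially the same route as the paper's proof: shift-invariance of the linear forms in Table \ref{tab_slN} for claim 2, a rank/dimension count (4 equations in 5 free variables) for claim 1, explicit solution of the generator systems $c_{ii},c_{12},c_{13},c_{15}$ together with the check that each one permutes the eight expressions $d_i$ --- and hence the fusion types --- for claim 3, and composition of these generators (a regular $\mathbb{Z}_2\times\mathbb{Z}_2\times\mathbb{Z}_2$ action on the eight types) for claim 4. The compatibility check you single out as the main obstacle is exactly the step the paper carries out by substituting the $\hat{c}_{15}$ change of variables into the inequality conditions and observing that it permutes the list $\{d_i\}$.
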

	\begin{proof}
		
		\
		
	\begin{enumerate}

	\item One can check that $d,\delta_{X},\delta_{Y}$ and $\delta_{Z}$ are linearly independent for every fusion type. In other words, the system is not degenerate, consequently there is a solution for every $d$. Since we have 4 equations but 5 independent variables $r_1,r_2,r_3,m_1,m_2$ ($m_3$ is fixed by the number of boxes conservation condition), the solution has one free parameter.
	
	\item One may notice that Table \ref{tab_slN} has one specific property. We will call $r_i,m_i$ \textit{atomic variables}. Each expression of $d$ and $\delta$ is a linear combination of atomic variables with integer coefficients. it can be seen that the sum of positive coefficients is equal to the sum of negative ones. If one increase all atomic variables by $C$, the values in Table \ref{tab_slN} does not change. Therefore, each solution has the free parameter that is added to $r_i,m_i$.
	
	Let us write the $c_{11}$ solution as an example, we will call it the \textit{basic solution}. The system for this type is:
	\begin{equation}
	\begin{cases}
	m_1-m_2 = \widetilde{m}_1-\widetilde{m}_2\\
	2m_1-r_1-r_2 = 2\widetilde{m}_1-\widetilde{r}_1-\widetilde{r}_2\\
	2m_1-r_2-r_3 = 2\widetilde{m}_1-\widetilde{r}_2-\widetilde{r}_3\\
	2m_1-r_1-r_3 = 2\widetilde{m}_1-\widetilde{r}_1-\widetilde{r}_3
	\end{cases}
	\end{equation}
	Obviously, the system will be satisfied if $r_i=\widetilde{r}_i, m_i=\widetilde{m}_i, 0\le i\le3$. Also we can notice that each side of equations is a substitution of atomic variables, $C$ occurs as an additive constant to the atomic variables. The basic solution is:
	\begin{align}\label{basic}
	U\left[ \begin{matrix}
	[r_1] & [r_2]\\
	[r_3] & [m_1,m_2,m_3]\\
	\end{matrix} \right] =
	U\left[ \begin{matrix}
	[r_1+C] & [r_2+C]\\
	[r_3+C] & \left[m_1+C, m_2+C,m_3+C\right]\\
	\end{matrix} \right]
	\end{align}
	
	As long as we know that the free parameter occurs in a solution as an additive constant, a system $c_{ii}$ with coinciding sides of equations has the same solution. Therefore, ($\ref{basic}$) satisfies not only system $c_{11}$, but  $\forall i \ c_{ii}$. In other words, every Racah matrix can be transformed as in equation ($\ref{basic}$), hence it may be applied for all possible types and, consequently, for all $r_i,m_i$ without any inequality restrictions.
	
	\item Solving systems $c_{12},c_{13},c_{15}$ and omitting inequality conditions on $d$, we get
	\begin{gather}\label{cases_sol}
	c_{15}:\begin{cases}
	r_1=C-\widetilde{r}_1\\
	r_2=C-\widetilde{r}_2\\
	r_3=C-\widetilde{r}_3\\
	m_1=C-\widetilde{m}_3\\
	m_2=C-\widetilde{m}_2\\
	m_3=C-\widetilde{m}_1\\
	\end{cases}
	c_{12}:\begin{cases}
	r_1=\widetilde{m}_1-\widetilde{m}_2+\widetilde{m}_3+C\\
	r_2=\widetilde{r}_3+C\\
	r_3=\widetilde{r}_2+C\\
	m_1=\widetilde{m}_1+C\\
	m_2=\widetilde{r}_2+\widetilde{r}_3-\widetilde{m}_2+C\\
	m_3=\widetilde{m}_3+C\\
	\end{cases}\hspace{-5mm}
	c_{13}:\begin{cases}
	r_1=\widetilde{r}_3+C\\
	r_2=\widetilde{m}_1-\widetilde{m}_2+\widetilde{m}_3+C\\
	r_3=\widetilde{r}_1+C\\
	m_1=\widetilde{m}_1+C\\
	m_2=\widetilde{r}_1+\widetilde{r}_3-\widetilde{m}_2+C\\
	m_3=\widetilde{m}_3+C\\
	\end{cases}
	\end{gather}
	Let us examine $c_{15}$, the other equations can be solved  similarly. The $c_{15}$ solution may be used as an operator $\hat{c}_{15}$ that transform $r_i,m_i$ into $\widetilde{r}_i, \widetilde{m}_i$ that is just the change of variables. As it can be seen from the definition, this transformation changes the expressions of type 1 into the type 5 expressions, but it is still unclear what is going on with inequality conditions. To check this we can substitute variables in the inequalities $d_1 = \min(d_i)$:
	
	\begin{equation}
	d_1 = \min\left(\begin{matrix}
	m_1-m_2 \\
	r_1-m_3 \\
	r_2-m_3 \\
	r_3-m_3 \\
	m_2-m_3 \\
	m_1-r_1 \\
	m_1-r_2 \\
	m_1-r_3 \\
	\end{matrix}\right)\ \longrightarrow \ \widetilde{d}_5 = \min\left( \begin{matrix}
	\widetilde{m}_2-\widetilde{m}_3 \\
	\widetilde{m}_1-\widetilde{r}_1 \\
	\widetilde{m}_1-\widetilde{r}_2 \\
	\widetilde{m}_1-\widetilde{r}_3 \\
	\widetilde{m}_1-\widetilde{m}_2 \\
	\widetilde{r}_1-\widetilde{m}_3 \\
	\widetilde{r}_2-\widetilde{m}_3 \\
	\widetilde{r}_3-\widetilde{m}_3
	\end{matrix}\right) = \min\left( \begin{matrix}
	\widetilde{d_5} \\
	\widetilde{d_6} \\
	\widetilde{d_7} \\
	\widetilde{d_8} \\
	\widetilde{d_1} \\
	\widetilde{d_2} \\
	\widetilde{d_3} \\
	\widetilde{d_4}
	\end{matrix}\right)
	\end{equation}
	This transformation preserves the inequalities. Consequently, the $\hat{c}_{15}$ domain is entire type 1 and the codomain is entire type 5.
	
	However, there is an important property of this solution that is the essential one. As we have seen above, $\hat{c}_{15}$ acts on the set of $d$ expressions like a permutation. The full statement is that $\hat{c}_{15}$ acts on the Table \ref{tab_slN} rows as a permutation, moving $d_i, \delta_x, \delta_y, \delta_z$ simultaneously. If we apply $\hat{c}_{15}$ to arbitrary type $i$, $d_i\rightarrow \widetilde{d}_{j(i)}$ and the inequalities will be satisfied too. This is the change of notations, not values, so $d_i = \widetilde{d}_{j(i)}$, $\delta_x = \widetilde{\delta}_x$ and so on, hence this is just $\hat{c}_{ij}$. For $c_{15}$ the type permutation is $(1,2,3,4, 5,6,7,8) \rightarrow (5,6,7,8, 1,2,3,4)$. Therefore, $\hat{c}_{15}$ transforms every Racah matrix into another one with the mentioned change of type and there are no restricting inequalities because inequalities on RHS are equivalent to the LHS ones for every type. As we will show later, $\hat{c}_{15}$ is equal to $\hat{c}_{26}, \hat{c}_{37}$, etc.
	
	The third statement is proved. Below we describe these transformations as symmetries and call them as if they were in $U_q(sl_2)$. Although they are different, they are the straightforward analogues of $U_q(sl_2)$ ones. Let us write down the permutations.
	\begin{enumerate}
		\item $\hat{c}_{ii}$, $1\le i\le 8$ do not change the type;
		\item $\hat{c}_{15}$ is Regge symmetry and swaps $(1,2,3,4) \leftrightarrow (5,6,7,8)$;
		\item $\hat{c}_{12}$ permutes $(r_2,r_3)$, so $(1,3,5,7) \leftrightarrow (2,4,6,8)$;
		\item $\hat{c}_{13}$ permutes $(r_1,r_3)$, so $(1,2,5,6) \leftrightarrow (3,4,7,8)$.
	\end{enumerate}
	
	\item We will perform the composition of derived symmetries in order to get  $c_{11}$ from arbitrary $c_{ij}$. Firstly, we will transform $c_{ij}$ into $1 \le i' , j'\le 4$ using that $\hat{c}_{15}$ moves types from the second half to the first one. If $4 < i, j\le 8$, we apply it for both sides of $c_{ij}$. If only one index is greater than 4, the transformation is needed only to that side. As a result, we obtain $c_{i'j'}$, where $1 \le i', j'\le 4$. Now we do the similar operation to transform  $c_{i'j'}$ into $1 \le i'' , j''\le 2$ using that $\hat{c}_{13}$ moves types $(3,4)$ to $(1,2)$. Then we may use $\hat{c}_{12}$ to get $c_{11}$. So, every $c_{ij}$ can be expressed in basic solution and 3 additional symmetries' composition. Obtained expressions are correct for all $r_i,m_i$ without additional conditions.

	If we look at them as a group of 8 symmetries neglecting $C$ addition, there are identity, 3 independent elements, and 4 more elements can be obtained by compositions. Every transformation being squared gives the basic one, so it is an involution for the particular $C$ ($C = 0$ for $c_{11}$ or $\forall C \ge m_1$ for $c_{15}$). It is very similar to the situation in $U_q(sl_2)$, so we can call new symmetries analogously to $U_q(sl_2)$. Symmetry from $c_{15}$ is clearly the Regge transformation analogue, another two act similar to $U_q(sl_2)$ permutations. In total, we have discovered that 64 cases of $c_{ij}$ are just 8 solutions that split into 8 different types. These symmetries form a group of 8 elements (for a fixed $C$).
\end{enumerate}

\end{proof}
The proof of Statement \ref{theor} also allows us to generalize the Regge symmetry, which in the case of $ sl (N) $ symmetric representations can be written as follows ($C>m_1$):
\begin{equation}
U\left[ \begin{matrix}\label{regge}
[r_1] & [r_2]\\
[r_3] & [m_1,m_2,m_3]\\
\end{matrix} \right] =U\left[ \begin{matrix}
[C-r_1] & [C-r_2]\\
[C-r_3] & \left[ C-m_3,C-m_2,C-m_1\right]
\end{matrix}\right]
\end{equation}
Also, the second and the third symmetry may be seen as a tetrahedral symmetry generalization from $U_q(sl_2)$ for the inclusive class of Racah matrices. For example, $r_1\leftrightarrow r_2$ permutation analogue:
\begin{equation}
U\left[ \begin{matrix}
[r_1] & [r_2]\\
[r_3] & [m_1,m_2,m_3]\\
\end{matrix} \right] = U\left[ \begin{matrix}
[r_2+C] & [r_1+C]\\
[m_1{-}m_2{+}m_3{+}C] & \left[ m_1{+}C,r_1{+}r_2{-}m_2{+}C,m_3{+}C\right]\\
\end{matrix} \right]
\end{equation}
\subsection{Another approach to (\ref{sol_slN}) derivation}\label{26}
There is another way to derive these symmetries. In \cite{3SB} a new connection between symmetric $U_q(sl_N)$ and $U_q(sl_2)$ Racah matrices was derived from eigenvalue hypothesis:
\begin{equation}
U_{U_q(sl_N)}\left[ \begin{matrix}
[r_1] & [r_2]\\
[r_3] & [m_1,m_2,m_3]
\end{matrix} \right] = U_{U_q(sl_2)}\left[ \begin{matrix}
r_1-m_3 & r_2-m_3\\
r_3-m_3 & m_1-m_2
\end{matrix} \right]
\end{equation}
This allows us to derive $U_q(sl_N)$ symmetries as a continuation of known $U_q(sl_2)$ symmetries to arbitrary $N$. The answers obtained from both approaches are the same, because both derivations use the eigenvalue hypothesis, just in a different way. Let us derive Regge symmetry and one permutation, all other symmetries may be obtained in the same way.

Now we can apply Regge symmetry and then reexpress the symbol as $U_q(sl_N)$ one:
\begin{equation}
U_{U_q(sl_2)}\left[ \begin{matrix}
r_1-m_3 & r_2-m_3\\
r_3-m_3 & m_1-m_2
\end{matrix} \right] = U_{U_q(sl_2)}\left[ \begin{matrix}
m_1-r_1 & m_1-r_2\\
m_1-r_3 & m_2-m_3
\end{matrix} \right] = U_{U_q(sl_N)}\left[ \begin{matrix}
[\widetilde{r}_1] & [\widetilde{r}_2]\\
[\widetilde{r}_3] & [\widetilde{m}_1,\widetilde{m}_2,\widetilde{m}_3]
\end{matrix} \right]
\end{equation}
Solving the system of equations for $\widetilde{r}_1,\widetilde{r}_2,\widetilde{r}_3,\widetilde{m}_1,\widetilde{m}_2,\widetilde{m}_3$ with fusion rule conditions we get the following symmetry of $U_q(sl_N)$ Racah matrices:
\begin{equation}
U_{U_q(sl_N)}\left[ \begin{matrix}
[r_1] & [r_2]\\
[r_3] & [m_1,m_2,m_3]
\end{matrix} \right] = U_{U_q(sl_N)}\left[ \begin{matrix}
[C-r_1] & [C-r_2]\\
[C-r_3] & [C-m_3,C-m_2,C-m_1]
\end{matrix} \right]
\end{equation}
This is the same symmetry we derived above. Now let us consider row permutation:
\begin{equation}
U_{U_q(sl_2)}\left[ \begin{matrix}
r_1-m_3 & r_2-m_3\\
r_3-m_3 & m_1-m_2
\end{matrix} \right] = U_{U_q(sl_2)}\left[ \begin{matrix}
r_3-m_3 & m_1-m_2\\
r_1-m_3 & r_2-m_3
\end{matrix} \right] = U_{U_q(sl_N)}\left[ \begin{matrix}
[\widetilde{r}_1] & [\widetilde{r}_2]\\
[\widetilde{r}_3] & [\widetilde{m}_1,\widetilde{m}_2,\widetilde{m}_3]
\end{matrix} \right]
\end{equation}
The solution is:
\begin{equation}
U_{U_q(sl_N)}\left[ \begin{matrix}
[r_1] & [r_2]\\
[r_3] & [m_1,m_2,m_3]
\end{matrix} \right] = U_{U_q(sl_N)}\left[ \begin{matrix}
[r_1+C] & [m_1-m_2+m_3+C]\\
[r_3+C] & [m_1+C,r_1+r_3-m_2+C,m_3+C]
\end{matrix} \right]
\end{equation}
This also coincides with symmetries (\ref{sol_slN}).
\section{Symmetries in the exclusive $U_q(sl_N)$ cases}\label{S6}
Tetrahedral symmetries are widely known for both $U_q(sl_N)$ and $U_q(sl_2)$ cases. However, the known $U_q(sl_N)$ generalization connects only Racah matrices of the particular type, including exclusive ones. This generalization is expressed as follows:
\begin{defin}
	Tetrahedral symmetry is the known property of 6j-symbol to be invariant after transformations \cite{tetra} ($\rho_i,\mu,\nu$ are arbitrary Young diagrams):
	\begin{align}\label{tetra}
	\left\{ \begin{matrix}
	\rho_1 & \rho_2 & \mu \\
	\rho_3 & \rho_4 &\nu \end{matrix} \right\}
	&= \left\{ \begin{matrix}
	\overline{\rho_3} & \overline{\rho_2} &\overline{\nu} \\
	\overline{\rho_1} & \overline{\rho_4} &\overline{\mu}\end{matrix} \right\}
	= \left\{ \begin{matrix}
	\rho_3 & \overline{\rho_4} & \overline{\mu} \\
	\rho_1 & \overline{\rho_2} & \overline{\nu}\end{matrix} \right\}    =\\
	&= \left\{ \begin{matrix}
	{\rho_1} & \overline{\mu} & \overline{\rho_2} \\
	\overline{\rho}_3 & \overline{\nu} & \overline{\rho_4}\end{matrix} \right\}
	=\left\{ \begin{matrix}
	{\rho_2} & {\rho_1} & {\mu} \\
	\overline{\rho}_4 & \overline{\rho_3} & \overline{\nu}\end{matrix} \right\}\nonumber.
	\end{align}
\end{defin}
In this section we investigate whether there are some possibilities for the eigenvalue hypothesis to obtain Racah symmetries with non-symmetric representations using the example of the exclusive Racah matrices. As a result, only tetrahedral symmetries are obtained.
   Then we do the same for a more general class of 6j-symbols and get some new symmetries. Unfortunately, we cannot check these symmetries on particular examples, because corresponding 6j-symbols are still unknown. However, our aim here is to show that the eigenvalue hypothesis can be applied to a wide range of 6j-symbols.

\subsection{$R_4$ is symmetric}
\begin{defin}
	We shall call two 6j-symbols below type I and type II \cite{MFS}.
	\begin{align}
	\text{I type: }\left\lbrace \begin{matrix}
	[r_1] & \overline{[r_2]} & X\\
	[r_3] & [r_4] & Y
	\end{matrix} \right\rbrace \hspace{5mm}
	\text{II type: }
	\left\lbrace \begin{matrix}
	[r_1] & [r_2] & X\\
	\overline{[r_3]} & [r_4] & Y
	\end{matrix} \right\rbrace
	\end{align}
	where $r_1,r_2,r_3,r_4$ are integers that denote numbers of boxes for $U_q(sl_N)$ symmetric representations $R_1,R_2,R_3,R_4$.  $X,Y$ are Young diagrams that satisfy the fusion rules.
\end{defin}
\begin{statement}
	Every exclusive Racah coefficient with symmetric and conjugate to symmetric representations belongs to one of the two kinds: type I or type II.
\end{statement}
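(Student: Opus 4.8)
The plan is to regard an exclusive Racah coefficient as a $6j$-symbol whose four outer legs $R_1,R_2,R_3,R_4$ are the simple (symmetric or conjugate-to-symmetric) representations and whose two inner labels $X\subset R_1\otimes R_2$ and $Y\subset R_2\otimes R_3$ are the matrix indices appearing in (\ref{U_mat_def}), and then to classify the admissible conjugation patterns of the outer legs modulo the symmetries that preserve this inner/outer split. First I would record the four triangle conditions coming from (\ref{U_mat_def}), fix the standing assumption of this subsection that $R_4$ is symmetric, and note that a coefficient is \emph{inclusive} exactly when all four legs can simultaneously be made symmetric and \emph{exclusive} otherwise; the task is then to show that every exclusive pattern is equivalent to ``one bar on $R_2$'' (type I) or ``one bar on $R_3$'' (type II).

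The next step is to pin down the relevant symmetry group. In the tetrahedral picture the three columns of the symbol are the pairs of opposite edges $(R_1,R_3)$, $(R_2,R_4)$ and $(X,Y)$, so the inner pair $\{X,Y\}$ is itself a pair of opposite edges. The transformations (\ref{tetra}) that keep $\{X,Y\}$ in the inner slots therefore form the stabilizer of one opposite-edge pair, a dihedral group of order $8$, which I would combine with global complex conjugation (the $q\leftrightarrow q^{-1}$ symmetry). From (\ref{tetra}) I would extract the induced action of these generators on the bar-pattern of $(R_1,R_2,R_3,R_4)$; the two moves I expect to do the work are the pure interchange $R_1\leftrightarrow R_3$ (accompanied by $X\leftrightarrow Y$ and no net conjugation) and the column-type moves of (\ref{tetra}) that swap a pair of legs while conjugating it.

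I would then run the orbit reduction on the $2^{3}$ bar-patterns of $(R_1,R_2,R_3)$ with $R_4$ held symmetric. The single-bar patterns split into two classes: ``bar on $R_2$'', which is type I, and ``bar on $R_1$'' $\sim$ ``bar on $R_3$'', identified through the interchange $R_1\leftrightarrow R_3$, which is type II. For the multi-bar patterns I would use the conjugating column moves together with complex conjugation to show that each is either equivalent to one of these two representatives or collapses onto the inclusive (all-symmetric) orbit and is therefore discarded; a useful sanity check here is the $\mathbb{Z}_N$ charge (number of boxes modulo $N$), which must match across every step so that no reduction passes through a vanishing symbol. Finally I would verify that type I and type II are genuinely inequivalent under the whole group, for instance by noting that in type I both inner channels $X,Y$ arise from the product of a symmetric and a conjugate-to-symmetric representation, whereas in type II the channel $X$ comes from a product of two symmetric representations; this yields exactly the two claimed kinds.

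The main obstacle I anticipate is the conjugation bookkeeping: each transformation in (\ref{tetra}) conjugates several legs \emph{and} the inner labels simultaneously, so one must check at every step that the reduction really preserves the structure ``inner labels honest fusion channels, outer legs simple'' and keeps $R_4$ symmetric, rather than silently moving a general label into an outer slot. The related delicate point is separating patterns that are authentically exclusive (genuinely $N$-dependent) from those — such as simultaneous bars on two opposite legs — that a conjugating column move quietly returns to the inclusive case; getting this dichotomy exactly right is what ultimately forces the count down to precisely two types.
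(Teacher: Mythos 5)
Your proposal is correct in outline, but it takes a genuinely different route from the paper --- in fact the paper gives this proposition no proof at all, because under its conventions there is nothing to prove: in Section 2.2 \emph{exclusive} Racah matrices are defined to be precisely the two objects $U\left[\begin{matrix} R_1 & \overline{R_2}\\ R_3 & R_4\end{matrix}\right]$ and $U\left[\begin{matrix} R_1 & R_2\\ \overline{R_3} & R_4\end{matrix}\right]$ arising in arborescent calculus, so once all $R_i$ are taken symmetric these are verbatim type I and type II, and the proposition is a definitional remark. What you prove instead is a stronger, more intrinsic statement: any assignment of conjugations to four symmetric-or-conjugate-to-symmetric legs that is genuinely $N$-dependent reduces, modulo the moves (\ref{tetra}) together with conjugation of all six arguments, to one of the two types. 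Your steps do check out: the pure interchange $R_1\leftrightarrow R_3$ is the first move of (\ref{tetra}) composed with total conjugation, which identifies ``bar on $R_1$'' with ``bar on $R_3$''; every two-bar pattern collapses to the all-symmetric (inclusive) orbit (bars on $R_1,R_3$ go to the all-barred pattern by the second move of (\ref{tetra}) and then to all-symmetric by total conjugation, and the other two-bar patterns reduce to this case by the last move); the three-bar pattern is total-conjugate to ``bar on $R_4$ only'', which your $\mathbb{Z}_N$ box-counting check kills for generic $N$; and your invariant separating the types --- whether the channel $X$ sits in sym\,$\otimes$\,sym or in sym\,$\otimes$\,conjugate-sym --- is indeed preserved by the stabilizer of the inner pair. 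The trade-off: the paper's reading is immediate but says nothing beyond its own definition of ``exclusive'', whereas yours establishes that the type I/type II dichotomy is independent of where one chooses to put the conjugation, at the price of exactly the bookkeeping you flag as the delicate point.
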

Let us consider a couple of type I 6j-symbols where $r_4=r_1+r_3-r_2$.
\begin{equation}
\left\lbrace \begin{matrix}
[r_1] & \overline{[r_2]} & X\\
[r_3] & [r_4] & Y
\end{matrix} \right\rbrace = \left\lbrace \begin{matrix}
[\widetilde{r}_1] & \overline{[\widetilde{r}_2]} & \widetilde{X}\\
[\widetilde{r}_3] & [\widetilde{r}_4] & \widetilde{Y}
\end{matrix} \right\rbrace
\end{equation}
Fusion rules for $X,Y$ and $Z$ are obtained for the more general case in the next subsection, where $R_4 = [k_1+r_2, r_2^{N-2}, k_2]$. For this case one should assume $k_1=r_4,k_2=r_2$ in Table \ref{d_conj}. The system may be solved manually due to a small number of cases. The solutions are:\\
\begin{equation}U\left[ \begin{matrix}
[r_1] & \overline{[r_2]}\\
[r_3] & [r_4]\\
\end{matrix} \right] = U\left[ \begin{matrix}
[r_4] & \overline{[r_3]}\\
[r_2] & [r_1]\\
\end{matrix} \right]= U\left[ \begin{matrix}
[r_2] & \overline{[r_1]}\\
[r_4] & [r_3]\\
\end{matrix} \right]= U\left[ \begin{matrix}
[r_3] & \overline{[r_4]}\\
[r_1] & [r_2]\\
\end{matrix} \right]\end{equation}
It can be easily checked that symmetries above are just tetrahedral symmetry. The situation is the same for type II 6j-symbols.
\subsection{$R_4$ is a combination of symmetric and conjugate to symmetric}
In this subsection we consider another class of 6j-symbols that differs in $R_4$. The key point of this derivation is to show that the eigenvalue hypothesis can give us nontrivial symmetries in a more complex situations than symmetric representations. For that reason we generalized the previous case by replacing $R_4 = [\alpha, \beta^{N-1}]$ with $R_4 = [\alpha, \beta^{N-2}, \gamma]$, where $\alpha,\beta,\gamma$ are some non-negative integers.

Let us consider a couple of generalized type I 6j-symbols where $r_1,r_2,r_3, k_1, k_2$ and $\widetilde{r}_1,\widetilde{r}_2,\widetilde{r}_3, \widetilde{k}_1, \widetilde{k}_2$ are integers that denote the length of the row in a Young diagram. Note that from fusion rules $r_1+r_3=k_1+k_2$ should be satisfied for non-trivial 6j-symbols.
\begin{equation}U\left[ \begin{matrix}
[r_1] & \overline{[r_2]}\\
[r_3] & [k_1+r_2,r_2^{N-2},k_2]\\
\end{matrix} \right] = U\left[ \begin{matrix}
[\widetilde{r}_1] & \overline{[\widetilde{r}_3]}\\
[\widetilde{r}_3] & [\widetilde{k}_1+ \widetilde{r}_2, \widetilde{r}_2^{N-2}, \widetilde{k}_2]
\end{matrix} \right]
\end{equation}
\begin{lemma}
	The Racah types obtained from fusion rules are described in Table \ref{d_conj}.
	\begin{table}[h]\begin{center}
			\begin{tabular}{|c|c|c|c|c|}
				\hline
				type & $d$ & $\delta_X$ & $\delta_Y$ & $\delta_Z$ \\
				\hline
				$1$ & $ r_1 $ & $ r_1+r_2 $ & $ 2k_1+r_2-r_3 $ & $ r_1 + r_3 $ \\
				\hline
				$2$ & $ r_3 $ & $ 2k_1-r_1+r_2 $ & $ r_2 + r_3 $ & $ r_1 + r_3 $ \\
				\hline
				$3$ & $ k_1 $ & $ 2k_1-r_1+r_2 $ & $ 2k_1+r_2-r_3 $ & $ r_1 + r_3 $ \\
				\hline
				$4$ & $ k_2 $ & $ r_1+r_2 $ & $ r_2 + r_3 $ & $ r_1 + r_3 $ \\
				\hline
			\end{tabular}
			\caption{$U_q(sl_N)$ generalized type I $U$-matrix types}\label{d_conj}
	\end{center}\end{table}
\end{lemma}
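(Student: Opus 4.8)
The plan is to repeat the Littlewood--Richardson analysis of Lemma \ref{lemma1}, adapted to the presence of the conjugate representation $\overline{[r_2]}=[r_2^{N-1}]$ and the three-row target $R_4=[k_1+r_2,r_2^{N-2},k_2]$. As in (\ref{tenzexp})--(\ref{conds}), the three sequences are generated by $X\subset R_1\otimes R_2=[r_1]\otimes\overline{[r_2]}$, $Y\subset R_2\otimes R_3=\overline{[r_2]}\otimes[r_3]$ and $Z\subset R_1\otimes R_3=[r_1]\otimes[r_3]$, subject to $R_4\subset X\otimes[r_3]$, $R_4\subset Y\otimes[r_1]$ and $R_4\subset Z\otimes\overline{[r_2]}$. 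The key observation is that $Z$ is a product of two symmetric diagrams and is therefore handled verbatim by the computation of Lemma \ref{lemma1} (giving $\delta_Z=r_1+r_3$ throughout), while only $X$ and $Y$ require new input because of the conjugation.

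First I would decompose $X$. Since $\overline{[r_2]}$ is the rectangle of height $N-1$ and width $r_2$, and $[r_1]$ is a single row, Pieri's rule (adding a horizontal strip of $r_1$ boxes, no two in the same column) forces the middle rows to stay equal to $r_2$, giving
$$X_j=[\,r_1+r_2-j,\ r_2^{\,N-2},\ j\,],\qquad 0\le j\le\min(r_1,r_2),$$
which is exactly the exclusive parametrization $[\mu_1-\alpha,\mu_2^{N-2},\mu_N+\alpha]$ with $\mu_1=r_1+r_2$, $\mu_2=r_2$, $\mu_N=0$. Next I would impose $R_4\subset X_j\otimes[r_3]$ by the same rule, adding a horizontal strip of $r_3$ boxes to $X_j$ to reach $[k_1+r_2,r_2^{N-2},k_2]$. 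The interlacing conditions for the top and bottom rows reduce, after using the box-conservation identity $r_1+r_3=k_1+k_2$, to
$$\max(0,\,k_2-r_3)\le j\le\min(r_1,k_2),$$
with the middle-row and column-disjointness conditions automatic. Reading off the length of this range gives $d=\min(r_1,r_3,k_1,k_2)$, which splits into the four cases $d\in\{r_1,r_3,k_1,k_2\}$ of Table \ref{d_conj}; and since $\delta_X=\mu_1-\mu_N$ is evaluated at the smallest admissible index, $\delta_X=r_1+r_2-2j_{\min}$ equals $r_1+r_2$ when $j_{\min}=0$ and $2k_1-r_1+r_2$ when $j_{\min}=k_2-r_3$, matching the $\delta_X$ column.

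Finally, the $Y$ sequence follows from the $X$ computation under the exchange $r_1\leftrightarrow r_3$, since the conjugate factor $\overline{[r_2]}$ is now flanked by $[r_3]$ on the left and $[r_1]$ on the right. This produces the same $d=\min(r_1,r_3,k_1,k_2)$ and $\delta_Y\in\{r_2+r_3,\ 2k_1+r_2-r_3\}$, the two values being selected by whether the lower end of the range for $Y$ is $0$ or $k_2-r_1$ in each of the four types. Checking that the three ranges share a common length $d$ (as they must by representation theory) and collecting the values of $\delta_X,\delta_Y,\delta_Z$ for each case then assembles Table \ref{d_conj}. The main obstacle is the Littlewood--Richardson bookkeeping with the conjugate representation: one must keep the shapes in the stable form $[\,\cdot\,,r_2^{N-2},\,\cdot\,]$ (deleting full height-$N$ columns where needed), verify the horizontal-strip no-two-boxes-in-a-column condition in every row simultaneously, and confirm that the apparently different constraints coming from the three fusions all collapse to the single quantity $d=\min(r_1,r_3,k_1,k_2)$.
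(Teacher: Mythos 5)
Your overall route coincides with the paper's, which omits the derivation with the remark that it is the same Littlewood--Richardson analysis as in Lemma \ref{lemma1}; and your treatment of $X$, $Y$ and $d$ is correct. Indeed $X_j=[r_1+r_2-j,r_2^{N-2},j]$, the horizontal-strip conditions for $R_4\subset X_j\otimes[r_3]$ give $\max(0,k_2-r_3)\le j\le\min(r_1,r_2,k_2)$, and since $k_2\le r_2$ is forced by $R_4=[k_1+r_2,r_2^{N-2},k_2]$ being a valid Young diagram (a point you should state, as you silently drop $r_2$ from the minimum), this is $\max(0,k_2-r_3)\le j\le\min(r_1,k_2)$; the range length is $\min(r_1,r_3,k_1,k_2)$, the four types follow, and $\delta_X=r_1+r_2-2j_{\min}$ together with the $r_1\leftrightarrow r_3$ exchange for $Y$ reproduces the $\delta_X$ and $\delta_Y$ columns.

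There is, however, a genuine gap in your treatment of $Z$. You claim that $Z$ is ``handled verbatim by the computation of Lemma \ref{lemma1}'' because $[r_1]\otimes[r_3]$ is a product of symmetric representations, and that only $X$ and $Y$ ``require new input because of the conjugation.'' This is not so: the decomposition $Z_\gamma=[r_1+r_3-\gamma,\gamma]$ is indeed as in Lemma \ref{lemma1}, but the constraint that truncates the range is $R_4\subset Z_\gamma\otimes\overline{[r_2]}$, which involves the conjugate representation and the $N$-row diagram $R_4$ --- precisely the situation Lemma \ref{lemma1} does not cover. Moreover, a verbatim application of Lemma \ref{lemma1} (Table \ref{tab_slN}) would give a \emph{type-dependent} $\delta_Z$, whereas the constancy $\delta_Z=r_1+r_3$ across all four rows is the one structural feature of Table \ref{d_conj} that differs from Table \ref{tab_slN}, and it is exactly what your argument does not establish. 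The missing step is short: by adjunction, $R_4\subset Z_\gamma\otimes\overline{[r_2]}$ if and only if $Z_\gamma\subset R_4\otimes[r_2]$. Adding a horizontal $r_2$-strip to $[k_1+r_2,r_2^{N-2},k_2]$ squeezes rows $3,\dots,N-1$ to stay equal to $r_2$, and reducibility to a two-row diagram forces row $N$ to be filled up to $r_2$ (so that the full height-$N$ columns can be deleted); the result is $[k_1+r_2+a_0,\,r_2+a_1,\,r_2^{N-2}]$ with $a_0+a_1=k_2$ and $0\le a_1\le k_1$, which after deleting the $r_2$ full columns is $[r_1+r_3-\gamma,\gamma]$ with $\gamma=a_1$ ranging over $0\le\gamma\le\min(k_1,k_2)$. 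Thus the $Z$ constraint imposes only upper bounds on $\gamma$, so $\gamma_{\min}=0$ and $\delta_Z=r_1+r_3$ in every type, which completes the table and also reconfirms that all three ranges have the common length $d=\min(r_1,r_3,k_1,k_2)$.
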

\begin{proof}
	The derivation of this lemma is the same as in the previous section, so it's omitted.
\end{proof}

The system can be solved manually due to a small number of cases. The solutions are:
\begin{align}
U\left[ \begin{matrix}
[r_1] & \overline{[r_2]}\\
[r_3] & [k_1+r_2,r_2^{N-2},k_2]\\
\end{matrix} \right] &= U\left[ \begin{matrix}
[r_3] & \overline{[r_2+k_1-k_2]}\\
[r_1] & [k_1+r_2,(r_2+k_1-k_2)^{N-2}, k_1]\\
\end{matrix} \right] =\\ =U\left[ \begin{matrix}
[k_1] & \overline{[k_1-r_1+r_2]}\\
[k_2] & [k_1+r_2,(k_1-r_1+r_2)^{N-2},r_3]\\
\end{matrix} \right]  &=U\left[ \begin{matrix}
[k_2] & \overline{[k_1-r_3+r_2]}\\
[k_1] & [k_1+r_2,(k_1-r_3+r_2)^{N-2},r_1]\\
\end{matrix} \right]\nonumber
\end{align}
We always can use $r_1+r_3 = k_1 + k_2$ to substitute $k_2$:
\begin{align}
	\hspace{-15mm}U\left[ \begin{matrix}
		[r_1] & \overline{[r_2]}\\
		[r_3] & [k_1+r_2,r_2^{N-2},r_1+r_3-k_1]\\
	\end{matrix} \right] &= U\left[ \begin{matrix}
		[r_3] & \overline{[r_2+2k_1-r_1-r_3]}\\
		[r_1] & [k_1+r_2,(r_2+2k_1-r_1-r_3)^{N-2}, k_1]\\
	\end{matrix} \right] =\\\hspace{-15mm} =U\left[ \begin{matrix}
		[k_1] & \overline{[k_1-r_1+r_2]}\\
		[r_1+r_3-k_1] & [k_1+r_2,(k_1-r_1+r_2)^{N-2},r_3]\\
	\end{matrix} \right]  &=U\left[ \begin{matrix}
		[r_1+r_3-k_1] & \overline{[k_1-r_3+r_2]}\\
		[k_1] & [k_1+r_2,(k_1-r_3+r_2)^{N-2},r_1]\\
	\end{matrix} \right]\nonumber
\end{align}
As we can see, there are some new relations for type I 6j-symbols. In a similar way relations for type II can be obtained.

\section{Selected results}
\begin{itemize}
	\item For the $U_q(sl_2)$ general case, the eigenvalue hypothesis has been proven, and it is equivalent to the following symmetries $\left(\rho = \frac{r_1 + r_2 + r_3 + r_4}{2}\right)$:
	\begin{align}
	U\left[ \begin{matrix}
	r_1 & r_2 \\
	r_3 & r_4
	\end{matrix} \right] = U\left[ \begin{matrix}
	r_4 & r_2 \\
	r_3 & r_1
	\end{matrix} \right] &= U\left[ \begin{matrix}
	r_3 & r_4 \\
	r_1 & r_2
	\end{matrix} \right]  = U\left[ \begin{matrix}
	r_2 & r_1 \\
	r_4 & r_3
	\end{matrix} \right]=\\ =
	U\left[ \begin{matrix}\nonumber
	\rho - r_1 & \rho - r_2 \\
	\rho - r_3 & \rho - r_4 \\
	\end{matrix} \right]= U\left[ \begin{matrix}
	\rho - r_4 & \rho - r_3 \\
	\rho - r_2 & \rho - r_1 \\
	\end{matrix} \right]  &= U\left[ \begin{matrix}
	\rho - r_3 & \rho - r_4 \\
	\rho - r_1 & \rho - r_2 \\
	\end{matrix} \right] = U\left[ \begin{matrix}
	\rho - r_2 & \rho - r_1 \\
	\rho - r_4 & \rho - r_2 \\
	\end{matrix} \right]\end{align}
	\item For inclusive $U_q(sl_N)$ with symmetric incoming representations, $N>3$, $\forall C\ge C_0$.
{ \begin{align}U\left[ \begin{matrix}
	[r_1] & [r_2]\\
	[r_3] & [m_1,m_2,m_3]\\
	\end{matrix} \right] &=  U\left[ \begin{matrix}
	[r_1+C] & [r_2+C]\\
	[r_3+C] & \left[m_1+C, m_2+C,m_3+C\right]\\
	\end{matrix} \right] \\
	&=U\left[ \begin{matrix}\nonumber
	[r_2+C] & [r_1+C]\\
	[m_1{-}m_2{+}m_3{+}C] & \left[ m_1{+}C,r_1{+}r_2{-}m_2{+}C,m_3{+}C\right]\\
	\end{matrix} \right] =\\
	&=U\left[ \begin{matrix}\nonumber
	[r_3+C] & [m_1-m_2+m_3+C]\\
	[r_1 +C] & \left[ m_1+C,r_1+r_3-m_2+C, m_3+C \right]\\
	\end{matrix} \right] = \\
	&=U\left[ \begin{matrix}\nonumber
	[m_1{-}m_2{+}m_3{+}C] & [r_3+C]\\
	[r_2+C] & \left[ m_1{+}C, r_2{+}r_3{-}m_2{+}C, m_3{+}C \right]\\
	\end{matrix} \right] =\\
	&=U\left[ \begin{matrix}\nonumber
	[C-r_1] & [C-r_2]\\
	[C-r_3] & \left[ C-m_3,C-m_2,C-m_1\right]
	\end{matrix}\right]=\\
	&= U\left[ \begin{matrix}\nonumber
	[C-r_2] & [C-r_1]\\
	[C{-}m_1{+}m_2{-}m_3] & \left[C{-}m_3,C{+}m_2{-}r_1{-}r_2,C{-}m_1\right]\\
	\end{matrix}\right]=\\
	&=U\left[ \begin{matrix}\nonumber
	[C-r_3] & [C-m_1+m_2-m_3]\\
	[C-r_1] & \left[C-m_3, C+m_2-r_1-r_3, C-m_1\right]\\
	\end{matrix}\right]=\\
	 &=U\left[ \begin{matrix}\nonumber
	[C{-}m_1{+}m_2{-}m_3] & [C-r_3]\\
	[C-r_2] & \left[C-m_3, C{+}m_2{-}r_2{-}r_3, C{-}m_1 \right]\\
	\end{matrix}\right]
	\end{align}}
where $r_1+r_2+r_3=m_1+m_2+m_3$.It also can be rewritten as $U_q(sl_3)$ solution in a trivial way.
	\item For exclusive $U_q(sl_N)$ 6j-symbols of type I, the eigenvalue hypothesis predicts only tetrahedral symmetries:
\begin{equation}U\left[ \begin{matrix}
		[r_1] & \overline{[r_2]}\\
		[r_3] & [r_4]\\
	\end{matrix} \right] = U\left[ \begin{matrix}
		[r_4] & \overline{[r_3]}\\
		[r_2] & [r_1]\\
	\end{matrix} \right]= U\left[ \begin{matrix}
		[r_2] & \overline{[r_1]}\\
		[r_4] & [r_3]\\
	\end{matrix} \right]= U\left[ \begin{matrix}
		[r_3] & \overline{[r_4]}\\
		[r_1] & [r_2]\\
	\end{matrix} \right]\end{equation}
	where restriction $r_1+r_3 = r_2 + r_4$ is assumed.
	\item For combined $R_4$, eigenvalue hypothesis predicts:
\begin{align}
	U\left[ \begin{matrix}
		[r_1] & \overline{[r_2]}\\
		[r_3] & [k_1+r_2,r_2^{N-2},k_2]\\
	\end{matrix} \right] &= U\left[ \begin{matrix}
		[r_3] & \overline{[r_2+k_1-k_2]}\\
		[r_1] & [k_1+r_2,(r_2+k_1-k_2)^{N-2}, k_1]\\
	\end{matrix} \right] =\\ =U\left[ \begin{matrix}
		[k_1] & \overline{[k_1-r_1+r_2]}\\
		[k_2] & [k_1+r_2,(k_1-r_1+r_2)^{N-2},r_3]\\
	\end{matrix} \right]  &=U\left[ \begin{matrix}
		[k_2] & \overline{[k_1-r_3+r_2]}\\
		[k_1] & [k_1+r_2,(k_1-r_3+r_2)^{N-2},r_1]\\
	\end{matrix} \right]\nonumber
\end{align}
	where restriction $r_1+r_3 = k_1 + k_2$ is assumed.
\end{itemize}
\section{Conclusion}
The main goal of this paper was to find some general relations for Racah matrices. Indeed, we have discovered a plenty of new symmetries. However, it is hard to say, whether these symmetries occur only in the considered case or it has more general form. As we can see from the second section, a tetrahedral symmetry for $U_q(sl_2)$ case may be generalized in two separate ways: as an $U_q(sl_N)$  tetrahedral symmetry or as a completely new symmetry.

Besides, there are no generalization for Regge symmetry in $U_q(sl_N)$. On the other hand, we have found the $U_q(sl_N)$ symmetry expression for inclusive case with $U_q(sl_N)$ symmetric incoming representations. This symmetry becomes the Regge one for $N=2$. This fact allows us to suggest that the Regge symmetry may be generalized to an arbitrary $U_q(sl_N)$ 6j-symbol.

Relations, discovered in section \ref{S5}, affirm that a $U_q(sl_N)$ multiplicity-free Racah matrix is not changed if a constant integer is added to every row in Young diagrams from the arguments. In other words, this case depends only on the difference between row lengths. That may be seen explicitly in the subsection \ref{26}, but it is possible that this feature is more general and can be applied for arbitrary $U_q(sl_N)$ 6j-symbols.

We should mention that the results are based on the eigenvalue conjecture that is not proven, but only known to be correct for a lot of examples. Nevertheless, it is proven for $U_q(sl_2)$, and, according to our research, we can claim that it is probably correct at least in such simple cases as symmetric incoming representations.

\section*{Acknowledgements}
Our work was partly supported by the grant of the Foundation for the Advancement of Theoretical Physics ``BASIS" (A.M., A.S. and A.V.), 17-01-00585 (A.M.), 18-31-20046 (A.S.), 19-51-50008-Yaf-a (A.M.), 18-51-05015-Arm-a (A.M, A.S.), 18-51-45010-Ind-a (A.M, A.S.), 19-51-53014-GFEN-a (A.M, A.S.), 20-01-00644 (A.M., A.S. and A.V.), by President of Russian Federation grant MK-2038.2019.1 (A.M.). The work was also partly funded by RFBR and NSFB according to the research project 19-51-18006 (A.M.). On behalf of all authors, the corresponding author states that there is no conflict of interest.

\bibliographystyle{unsrturl}
\bibliography{Ev_hyp_syms}{}

\end{document}